\tikzset{
commutative diagrams/.cd,
arrow style=tikz,
diagrams={>=latex}}
\newtheorem*{rep@theorem}{\rep@title}
\newcommand{\newreptheorem}[2]{%
\newenvironment{rep#1}[1]{%
 \def\rep@title{#2 \ref{##1}}%
 \begin{rep@theorem}}%
 {\end{rep@theorem}}}
\newtheorem{lemma}{Lemma}[section]
\newtheorem{corollary}[lemma]{Corollary}
\newtheorem{prop}[lemma]{Proposition}
\newtheorem{conj}[lemma]{Conjecture}
\newtheorem{claim}{Claim}[section]
\theoremstyle{definition}
\newtheorem{question}[lemma]{Question}
\theoremstyle{remark}
\newtheorem*{rem*}{Remark}
\newcommand{\jht}[1]{{}}
\newcommand{\jt}[1]{{}}
\definecolor{cobalt}{RGB}{44, 98, 120}
\definecolor{celadon}{rgb}{0.67, 0.88, 0.69}
\definecolor{dm}{cmyk}{.20, 0, .30, 0}
\definecolor{burgundy}{rgb}{0.5, 0.0, 0.13}
\definecolor{plotBlue}{RGB}{94, 130, 181}
\newcommand*\xoverline[2][0.75]{
    \sbox{\myboxA}{$\m@th#2$}
    \setbox\myboxB\null
    \ht\myboxB=\ht\myboxA
    \dp\myboxB=\dp\myboxA
    \wd\myboxB=#1\wd\myboxA
    \sbox\myboxB{$\m@th\overline{\copy\myboxB}$}
    \setlength\mylenA{\the\wd\myboxA}
    \addtolength\mylenA{-\the\wd\myboxB}
    \ifdim\wd\myboxB<\wd\myboxA
       \rlap{\hskip 0.5\mylenA\usebox\myboxB}{\usebox\myboxA}%
    \else
        \hskip -0.5\mylenA\rlap{\usebox\myboxA}{\hskip 0.5\mylenA\usebox\myboxB}%
    \fi}
\begin{document}

\newcommand{\main}{.}
\begin{titlepage}

\setcounter{page}{1} \baselineskip=15.5pt \thispagestyle{empty}

\bigskip\

\vspace{2cm}
\begin{center}
{\Huge 
On the Frozen F-theory Landscape}
\end{center}

\vspace{1cm}

\begin{center}
David R. Morrison$^{1,2}$ and Benjamin Sung$^1$

\vspace{1 cm}

\emph{$^{1}$Department of Mathematics\\University of California, Santa Barbara, CA 93106, USA}

\vspace{.3cm}
\emph{$^{2}$Department of Physics\\University of California, Santa Barbara, CA 93106, USA}

\end{center}

\vspace{1cm}
\noindent

\begin{abstract}
\ We study $6$d $\mathcal{N} = (1,0)$ supergravity theories arising in the frozen phase of F-theory. For each of the known global models, we construct an F-theory compactification in the unfrozen phase with an identical non-abelian gauge algebra and massless matter content. Two such low energy effective theories are then distinguished through gauge enhancements in moduli space. We study potentially new global models obtained via compact embeddings of a plethora of $6$d $\mathcal{N}= (1,0)$ superconformal field theories and little string theories constructed using frozen $7$-branes. In some cases, these provably do not exist, and in other cases, we explicitly construct a compact embedding, yielding $6$d supergravity theories with new massless spectra. Finally, by using gravitational anomaly cancellation, we conjecture the existence of localized neutral hypermultiplets along frozen $7$-brane loci.

\end{abstract}

\end{titlepage}

\clearpage

\tableofcontents
\newpage

\section{Introduction}
The landscape of F-theory compactifications~\cite{Vafa:1996xn,Morrison:1996na,Morrison:1996pp} on elliptic Calabi-Yau manifolds has given rise to the largest class of supersymmetric string vacua to date. Compactifications yielding $8$d $\mathcal{N} = 1$ vacua correspond to elliptic $K3$-surfaces, which are completely classified. On the other hand, $4$d $\mathcal{N} = 1$ vacua correspond to the choice of an elliptic Calabi-Yau fourfold, together with the specification of fluxes and $D3$-brane data. In stark contrast, our understanding of such vacua is largely bottom-up; there exists only algorithmic and computational approaches towards classifying the bases of such fourfolds, together with recent mathematical results on boundedness~\cite{MR4292177}. Moreover, in such ensembles, we have control only on the coarsest aspects of the physics, such as the gauge symmetry. Finally, $6$d $\mathcal{N} = 1$ vacua occupy a rather rich middle ground; there is a sharp link between the physics and the geometry of an elliptic Calabi-Yau threefold. The classification of surface bases is significantly more tractable, the massless content of the corresponding supergravity theories can be completely determined, and there is a rich, growing interplay between swampland constraints and the known boundaries of the landscape.

Nevertheless, there are known string compactifications in other duality frames which involve intricate non-geometric data, even in higher dimensions. For example, in an M-theory background on a $\mathbb{C}^2/\Gamma$ singularity, an additional $C_3$-field period can be specified on the boundary, leading to frozen singularities~\cite{deBoer:2001wca} with a restricted deformation space. The classification of compact backgrounds with such singularities and their low energy effective physics have largely been ignored, due in part to a lack of understanding of frozen singularities, as well as the fact that such vacua are rather special and are somewhat insignificant when searching for generic phenomena in the landscape.

A resurgence in their study was initiated in~\cite{Tachikawa:2015wka}, which demonstrated that the only new supersymmetric defect in F-theory from frozen singularities is a strongly coupled version of the orientifold plane, $O7^+$, with positive $D7$-brane charge. The study of $8$d F-theory vacua with $O7^+$-planes first appeared in~\cite{Witten:1997bs}, and a systematic classification of such compact models was investigated more recently in~\cite{Cvetic:2022uuu}. The effects of incorporating such defects in $8$ dimensions are markedly simple; they lead to either a rank $4$ or a rank $12$ vector multiplet moduli space, reduced from the usual $20$ dimensional space of vacua. In addition, they give rise to sympletic ($\mathfrak{sp}(n)$) gauge algebras, which does not occur in conventional $8$d F-theory compactifications.

The study of F-theory compactifications to $6$-dimensions with $O7^+$-planes was initiated in~\cite{Bhardwaj:2018jgp}, and a number of consistent compact and non-compact constructions~\cite{Bhardwaj:2019hhd} were given based on dualities to other frames. On the other hand, the complete list of conditions that are necessary and sufficient to specify a frozen $6d$ F-theory vacua and the resulting supergravity spectrum is still unknown. In this paper, we take a first step towards understanding the massless spectra of frozen F-theory compactifications to six dimensions. Our aim is to identify a proxy to distinguish the frozen phase from the unfrozen phase of $6d$ F-theory purely from the low energy spectrum, to construct compact embeddings of all known local frozen models, and to compare the resulting supergravity spectrum with the conventional $6d$ F-theory landscape. In the process, we will demonstrate the existence of $6d$ {\it supergravities} with new massless spectra arising from the frozen phase.

We emphasize that this is merely a first step towards understanding $6d$ F-theory compactifications with frozen $7$-branes. The ultimate goal in this direction is the compilation of precise necessary and sufficient conditions for specifying such compactifications, which would lead to a systematic construction of frozen vacua. It would also be desirable to investigate phase transitions among such models, and to make precise the relation with other interpretations of frozen vacua as appearing in~\cite{Bershadsky:1998vn} through the specification of B-field fluxes.

\subsection{Summary of Results}
In this paper, we study the massless supergravity spectrum in the frozen phase of F-theory. We emphasize that there are no known necessary and sufficient conditions guaranteeing the existence of a given F-theory model with frozen $7$-branes. Nevertheless, we will study a plethora of both local and global constructions, whose existence is strongly evidenced either through dualities or through a weakly coupled type IIB limit.

As explained above, a critical new feature in six dimensions is that the frozen and unfrozen phases of F-theory are, a priori, virtually indistinguishable at the level of the low energy effective physics. This leads us to the central question of study in this paper:

\begin{question}\label{q:distinguish}
To what extent can we distinguish between the frozen and unfrozen phases of F-theory from the non-abelian gauge sector and massless matter in the effective $6d$ $\mathcal{N} = (1,0)$ supergravity theories?
\end{question}

\begin{enumerate}
\item
Section~\ref{sec:ordinaryf}: For each known frozen $6$d F-theory model, we explicitly construct an unfrozen $6$d F-theory model with an identical massless spectrum. We give a criterion for distinguishing a frozen and an unfrozen supergravity theory with identical massless spectra by gauge enhancements in moduli space.
\item
Section~\ref{sec:frozenscft}: We prove that most of the known new constructions of superconformal theories using frozen $7$-branes do not admit a compact embedding. In the remaining special cases, we explicitly construct a compact embedding. As a consequence, we construct compact 6d F-theory vacua with new massless spectra in the frozen phase. 
\item
Section~\ref{sec:neutrals}: We conjecture the existence of localized neutral hypermultiplets along intersections of residual $I_1$-loci with frozen $7$-branes.
\end{enumerate}
Our bottom-up analysis of the known constructions in the frozen phase of F-theory lead us to the following statements, which comprise the main results of this paper.

We first study the known compact frozen F-theory models, whether they can be realized in the unfrozen phase, and what criteria are sufficient to distinguish between the two supergravity theories.
\begin{repclaim}{clm:limit}
Let $\mathcal{T}_1, \mathcal{T}_2$ be two $6d$ $\, \mathcal{N} = (1,0)$ supergravity theories with the same massless spectrum, engineered via a frozen/unfrozen F-theory compactification respectively. Assume that there exists an $\mathfrak{sp}(n)$ factor in both theories, which is localized on a frozen $7$-brane in the case of $\mathcal{T}_1$. 

Then $\mathcal{T}_2$ admits a limit in moduli space enhancing $\mathfrak{sp}(n)$ to $\mathfrak{so}(2n)$, while the analogous limit for $\mathcal{T}_1$ realizes a $(4,6)$-divisor.
\end{repclaim}
The existence of a $(4,6)$-divisor corresponds to an infinite distance limit, often realizing a decompactification of the effective $6$d theory. We note that such limits have been analyzed and sharpened in the frozen phase of F-theory in $8$ dimensions in~\cite{Cvetic:2022uuu}.

A natural question is whether the massless spectrum of a $6$d supergravity theory realized in the frozen phase of F-theory, can be realized via an F-theory compactification in the unfrozen phase as well. Given the wealth of constructions in~\cite{Bhardwaj:2019hhd}, we answer this question.

\begin{repclaim}{clm:cases}
Most, but not all of the new superconformal field theories and little string theories exhibited in~\cite{Bhardwaj:2019hhd} do not admit a compact embedding.

Nevertheless there exists compact frozen F-theory compactifications whose massless spectrum cannot be realized in the unfrozen phase of F-theory.
\end{repclaim}
Finally, based on our study of compact examples together with gravitational anomaly cancellation, we make the following conjecture.
\begin{repconj}{conj:neutrals}
Consider a $6$d F-theory compactification with a frozen $7$-brane localized along a divisor $D$ with residual discriminant $\widetilde{\Delta}$. Then there exists at least $\frac{1}{2}D \cdot \widetilde{\Delta}$ localized neutral hypermultiplets.
\end{repconj}

We conclude that the presence of $O7^+$-planes in $6$d F-theory compactifications lead to a number of implications for the low energy effective physics. Most notably, as demonstrated in Claim~\ref{clm:limit}, these lead to a certain obstruction for gauge enhancement in comparison with the unfrozen phase of F-theory, and this will be explored further in future work. In addition, we conjecture that the presence of such orientifold planes could potentially be a new source of localized neutral hypermultiplets, and it would be interesting to investigate this by establishing a weakly coupled type IIB limit in the frozen phase.

%In addition, such defects lead to additional $7$-brane loci
%We give evidence for Conjecture~\ref{conj:samespectrum} with the results of Sections~\ref{sec:ordinaryf} and~\ref{sec:frozenscft}.
%In section~\ref{sec:consistency}, we review the rules of~\cite{Bhardwaj:2018jgp}

%In section~\ref{sec:ordinaryf}, we revisit some models in~\cite{Bhardwaj:2018jgp} from the perspective of ordinary F-theory.

%In section~\ref{sec:frozenscft}, we demonstrate that the new scfts realized in the frozen phase in \cite{Bhardwaj:2019hhd} do not admit a compact embedding.
\section{Consistency conditions for frozen F-theory models}\label{sec:consistency}
In this section, we review the necessary conditions in specifying a compact, frozen $6d$ F-theory model, first appearing in~\cite{Bhardwaj:2018jgp}. In Section~\ref{sec:Ftheory}, we review the basics of $6d$ F-theory compactifications, emphasizing a perspective which generalizes well in defining compactifications with frozen $7$-branes. In Sections~\ref{sec:frozen} and~\ref{sec:frozenconstruct}, we review the basics of frozen singularities, their realization in F-theory models, and present an example.
\subsection{6d F-theory compactifications}\label{sec:Ftheory}
In this section, we review the basics of F-theory compactifications to six dimensions.

An F-theory compactification to six dimensions is specified by the choice of an elliptically fibered Calabi-Yau threefold $\pi \colon X \rightarrow B$ over a smooth complex algebraic surface $B$. Up to a sequence of birational transformations~\cite{NAKAYAMA1988405}, this data can be specified by a Weierstrass equation
\begin{equation}\label{eq:weierstrass}
y^2 = x^3 + fx + g, \qquad \Delta \colon 4f^3 + 27g^2 = 0
\end{equation}
where $f$ and $g$ are global sections of $-4K_B$ and $-6K_B$ respectively, with $-K_B$ denoting the anti-canonical divisor on $B$.
 The torus fiber degenerates along the algebraic curve $\Delta \subset B$, often called the discriminant locus, and in generic situations, the discriminant decomposes into a union of smooth irreducible algebraic curves $\Delta = \Delta_1 \cup \ldots \cup \Delta_n$. In particular, loops around the codimension-$1$ irreducible components $\Delta_i$ correspond to elements in $SL(2,\mathbb{Z})$, which naturally induce a monodromy action on the $1$-cycles of the torus fiber.

The low energy effective physics is specified by a $6$d $\mathcal{N} = (1,0)$ supergravity theory, and such theories occupy a particularly rich corner of the string landscape. On one hand, they are strongly constrained by gauge, gravitational, and mixed anomalies, and on the other hand, even the possible combinations of non-abelian gauge algebras and massless matter content has evaded a complete classification. Our knowledge of the possible supergravity theories resulting from string compactifications has progressed in tandem with our understanding of elliptic Calabi-Yau threefolds, whose singular limits have not been completely classified. We will not attempt to summarize the various constraints on $6d$ $\mathcal{N} = (1,0)$ supergravity theories and we point to the vast literature~\cite{Kumar:2009ac,Kumar:2010ru,Morrison:2011mb,Morrison:2012np,Morrison:2012js,Kumar:2010am,Kumar:2009ae,Kumar:2009us,Park:2011wv,Monnier:2017oqd} for a comprehensive treatment.

The details of the effective $6$d $\mathcal{N} = (1,0)$ supergravity theory rely crucially on the singularities of the total space $X$. Indeed, codimension-$1$ irreducible components $\Delta_i$, specify the location of stacks of generalized $(p,q)$ $7$-branes, which can support non-abelian gauge algebras. At codimension-$2$, pairwise intersections of the discriminant components $\Delta_i \cap \Delta_j$ support localized modes corresponding to jointly charged matter. The precise non-abelian gauge group can be deduced from the standard classification of Kodaira-singular fibers together with an analysis of their global monodromies, and we refer to~\cite[Tables 1, 2]{Grassi:2014zxa} for details. The precise counting of jointly charged hypermultiplets is somewhat more subtle and depends on details of the intersections, and we refer to~\cite{Grassi:2011hq} for details.

In general, the gauge group and matter content of a $6$d F-theory compactification is often studied via M-theory compactified on a crepant resolution $\widehat{X} \rightarrow X$, and then inferred by taking the singular limit. Motivated by the non-existence of crepant resolutions in certain examples as well as an intrinsic study of the gauge theory at the singular point, a theme~\cite{Grassi:2018wfy, Grassi:2021ptc} has emerged which aims to develop techniques to study F-theory directly on $X$. In particular, the $6$-dimensional physics is sensitive only to the associated Weierstrass model~(\ref{eq:weierstrass}), and not to the details of the total space $X$. This is exemplified through the fact that the $SL(2,\mathbb{Z})$-monodromy associated to an irreducible component of the discriminant $\Delta_i$ admits a (unique up to conjugation) factorization~\cite[Section 3.2]{Grassi:2018wfy} and~\cite{velez2008normal}:
\[
M_i = M_{p_1, q_1} \cdot \ldots \cdot M_{p_n,q_n}, \quad M_{p,q} = \begin{pmatrix}
1 - p q & p^2 \\
-q^2 & 1+ p q
\end{pmatrix}
\]
Put differently, every discriminant component $\Delta_i$ can be decomposed uniquely into a combination of $(p,q)$ $7$-branes and we will use this perspective in our analysis of the frozen phase of F-theory.
\subsection{Frozen $7$-branes}\label{sec:frozen}
A somewhat less understood class of singularities in M-theory arise from singularities of the form $\mathbb{C}^2 / \Gamma$, together with the specification of a fractional $C_3$-field period on a boundary $3$-sphere. These so-called {\it frozen singularities}~\cite{deBoer:2001wca} have a restricted deformation space and exhibit a smaller gauge algebra than expected from the singularity itself. 

Though understood for quite some time~\cite{Berkooz:1996iz}, $O7^+$-planes (orientifold planes with positive RR-charge), were demonstrated in~\cite{Tachikawa:2015wka} to be the only class of frozen singularities arising in F-theory. This suggests that $O7^+$-planes are the only remaining class of supersymmetric defects in F-theory compactifications, and motivates a more systematic study of their effects on compactifications to lower dimensions. We briefly review their critical properties, summarizing the similarities and differences with the usual combinations of $(p,q)$ $7$-branes, and we refer to~\cite{Bhardwaj:2018jgp} for details. 

A single $O7^+$-plane shares a number of similarities with a configuration consisting of a single $O7^-$-plane parallel to $8$ $D7$-branes. As explored at length in~\cite{Witten:1997bs}, both exhibit $+4$ units of $D7$-brane charge, and a $D3$-brane encircling such defects exhibit a monodromy $SL(2,\mathbb{Z})$-conjugate to that of an $I_4^*$-fiber:
\[
M_{I_4^*} = \begin{pmatrix}
-1 & 4 \\
0 & 1
\end{pmatrix}
\]
Nevertheless, there are a number of critical differences; for one, an $O7^+$-plane on top of $n$ $D7$-brane yields an $\mathfrak{sp}(n)$ gauge symmetry, while an $O7^-$-plane on top of $n+4$ $D7$-branes yields an $\mathfrak{so}(2n+8)$ gauge symmetry. As a consequence, a single $D3$-brane on top of a single $O7^+$-plane may fractionate into two, which does not happen in the latter case, due to a factor of two in the relative Chan-Paton indices. Moreover, a $D3$-brane colliding with an $O7^+$-plane yields an $O(2)$-gauge theory with two charge $2$ massless hypermultiplets, while a collision with the latter configuration yields $SU(2)$ $N_f= 8$ Seiberg-Witten theory. These observations are in line with the results of~\cite{Cvetic:2022uuu}, which concluded that $(p,q)$-strings must end on an $O7^+$-plane with both even $p$ and $q$ charges in contrast to the latter case, where $p$ and $q$ can take arbitrary integer values.

\subsubsection{F-theory constructions}\label{sec:frozenconstruct}
As emphasized in Section~\ref{sec:Ftheory}, we will specify an F-theory construction through the choice of a smooth algebraic surface $B$, together with the location of $(p,q)$ $7$-branes localized along curves in $B$. Similarly, an F-theory construction in the frozen phase will be specified through the same information, together with the choice of replacing a stack of $(p,q)$ $7$-branes with the monodromy of an $I_{n+4}^*$-fiber by an $O7^+$-plane with $n$ $D7$-branes. Following~\cite{Bhardwaj:2018jgp}, we will denote the latter singularity with $\widehat{I}_{n+4}^*$.

On the other hand, as argued via duality in~\cite{Bhardwaj:2018jgp}, such an F-theory construction does not admit a canonical assignment of divisors supporting individual gauge algebras. In this section, we will review the necessary conditions in specifying a frozen $6$d F-theory compactification, but we emphasize that these are not sufficient. Throughout this section, we will fix a smooth compact algebraic surface $B$ with canonical divisor $K$, and Weierstrass polynomials $f$ and $g$. The associated discriminant $\Delta$ contains irreducible components $\Delta_a$, with corresponding monodromies $M_a$. 

A $6$d F-theory compactification in the frozen phase is specified by the following data:
\begin{enumerate}[label=(\alph*)]
\item
For each irreducible component $\Delta_a$ with monodromy matrix $M_a$ conjugate to $M_{I_n^*}$, the choice of an unfrozen/frozen $7$-brane, designated by $I_n^*$/$\widehat{I}_n^*$ respectively. We denote by $F$, the sum of irreducible divisors $D_a$ supporting a frozen $7$-brane.
\item
A collection of gauge algebras $\mathfrak{g}_i$, together with disjoint embeddings $\rho_{i,a} \colon \mathfrak{g}_i \xhookrightarrow{} \mathfrak{l}_a$ such that $\bigoplus_i \rho_{i,a}(\mathfrak{g}_i) \subset \mathfrak{l}_a$
\end{enumerate}
We define the associated {\it gauge divisor} by
\[
\Sigma_i \coloneqq \sum\limits_a \mu_{i,a} o_{i,a} D_a
\]
where $\mu_{i,a} = 0$ if $\rho_{i,a} = 0$, and is $1$ if $\rho_{i,a}$ is non-trivial. The coefficient $o_{i,a}$ is the Dynkin index of the embedding $\rho_{i,a}$ and we refer to~\cite{MR47629} for a precise definition. 

By invoking the Green-Schwarz mechanism in the presence of $O7^+$-planes, one finds the following modified anomaly polynomial
\[
I_{GS}^8 = -\frac{1}{2} \big(-(K+F) \frac{p_1(T)}{4} + \sum\limits_i \Sigma_i \nu(F_i)\big)^2
\]
where $p_1(T)$ is the pontryagin class of the tangent bundle of $B$ and $\nu(F_i)$ is the instanton number density of the field strength $F_i$ valued in $\mathfrak{g}_i$. This agrees with the usual anomaly polynomial when $F = 0$ and strongly suggests replacing the usual assignment of matter content with the simple substitution
\[
K \mapsto K +F 
\]
which leads to the following additional consistency conditions. We will demand that the following must hold to define a consistent frozen F-theory configuration:
\begin{enumerate}
\item
$\Sigma_i \cdot \Sigma_i \in \mathbb{Z}$, $\Sigma_i \cdot \Sigma_j \in \mathbb{Z}_{\geq 0 }$ \text{ for all } $i,j$
\item
$(K+F) \cdot \Sigma_i + \Sigma_i \cdot \Sigma_i = -2$
\item
For each pair $(\mathfrak{g}_i,\Sigma_i)$, the total number of all localized charged hypermultiplets must add up to the total number of hypermultiplets of corresponding representation, dictated by~\cite[Table 3.1]{Bhardwaj:2018jgp}.
\item
$n_{H,ch} - n_V < 273$
\end{enumerate}
Roughly, the first three conditions correspond to the cancellation of gauge anomalies, while the last imposes gravitational anomaly cancellation. We will assume that all intersections of gauge divisors are transversal, each of which hosts a single bi-fundamental, aside from $\mathfrak{so}-\mathfrak{sp}$ intersections which host a half bi-fundamental.

We briefly summarize the Aspinwall-Gross model, which first appeared in~\cite{Aspinwall:1996nk}, and admits a frozen variant described in~\cite{Bhardwaj:2018jgp}. Consider the base $B = \mathbb{F}_4$, with Weierstrass polynomials
\begin{align*}
f = e^2 \widetilde{f}, \quad g = e^3 \widetilde{g},  \quad \Delta = e^{18} q^{48} \widetilde{\Delta}
\end{align*}
where $\widetilde{f}, \widetilde{g}, \widetilde{\Delta}$ are generic members of the prescribed class. Moreover, $e$ denotes the exceptional divisor, and $q$ denotes the fiber class of $B$. We note that a concrete tuning realizing such polynomials on $B$ can be found in~\cite{Aspinwall:1996nk}.

In the conventional F-theory phase, $e$ supports an $I_{12}^*$-singularity with an $\mathfrak{so}(32)$-algebra, $q$ supports an $I_{48}^{ns}$-singularity with an $\mathfrak{sp}(24)$-algebra, and there is a single bi-fundamental localized at the intersection. Following the above prescription, we now assume that $e$ supports a frozen $\widehat{I}_{12}^*$-singularity. In addition, we define the following frozen and gauge divisors
\[
F = e, \quad \Sigma_1 = \frac{1}{2}e, \quad \Sigma_2 = 2q 
\]
where $\Sigma_1$ supports an $\mathfrak{sp}(8)$ gauge algebra, and $\Sigma_2$ supports a $\mathfrak{su}(24)$ gauge algebra. We note that the canonical embedding $\mathfrak{su}(24) \xhookrightarrow{} \mathfrak{su}(48)$ is an embedding of index $2$, contributing a factor of $2$ in definition of $\Sigma_2$. Finally we note that anomaly cancellation implies the following
\[
(K+F) \cdot \Sigma_1 = -1, \quad \Sigma_1 \cdot \Sigma_1 = -1, \quad (K+F) \cdot \Sigma_2 = -2, \quad \Sigma_2 \cdot \Sigma_2 = 0, \quad \Sigma_1 \cdot \Sigma_2 = 1
\]
which gives $1$ bifundamental of $\mathfrak{sp}(8) \times \mathfrak{su}(24)$ and $2$ antisymmetrics of $\mathfrak{su}(24)$. We note that the gauge group appearing on the fiber class $q$ in this compact case, is different from the expected gauge group of transverse $D7$-branes intersecting an $O7^+$-plane. 
\section{Frozen models via ordinary F-theory embeddings}\label{sec:ordinaryf}
In this section, we study in detail the massless spectrum of the compact models constructed in the frozen phase of F-theory in~\cite{Bhardwaj:2018jgp}. These models admit alternative constructions via duality to either the heterotic or type I string, and thus are particularly credible examples with which to explore consistency conditions for constructing frozen F-theory compactifications. In order to begin identifying universal features appearing in the frozen phase, for every such example, we will explicitly produce an unfrozen F-theory model which engineers a $6$d $\mathcal{N} = (1,0)$ supergravity theory with precisely the same massless content. For simplicity, we will ignore potential contributions from $\mathfrak{u}(1)$ factors, as well as the global forms of the gauge group.

In Section~\ref{sec:aspinwallgross}, we will construct an F-theory compactification on $\mathbb{F}_1$ reproducing the Aspinwall-Gross model with a frozen $7$-brane on $\mathbb{F}_4$. In Section~\ref{sec:1frozen}, we will perform a similar construction for the frozen compact $\mathbb{P}^1 \times \mathbb{P}^1$ examples. Such constructions are necessarily more involved, and will proceed in two steps:
\begin{enumerate}
\item
We first tune the required gauge algebras on curves with the correct mutual intersections on $\mathbb{P}^2$ using a Tate model.
\item
We then demonstrate that there exists blowups to a compact base with the correct number of tensor multiplets and matter content.
\end{enumerate}
In Section~\ref{sec:distinguishing}, we identify a criterion to distinguish between unfrozen and frozen F-theory compactifications realizing the same massless spectrum.

\subsection{Tate forms and $(4,6)$-points}\label{sec:tate}
In this section, we briefly review a more generalized form of the Weierstrass equation~(\ref{eq:weierstrass}) in engineering elliptic Calabi-Yau threefolds. The main utility of such constructions is that it allows us to engineer rather degenerate singular configurations using the results of~\cite{Katz:2011qp}. 
%$SU(4)$ on genus $1$ curve in blowup of $\mathbb{P}^2$ at $5$ points

In general, one can construct an elliptic Calabi-Yau threefold using a Tate model:
\begin{equation}\label{eq:tate}
y^2 + a_1xy+ a_3 y = x^3 + a_2 x^2 + a_4 x + a_6
\end{equation}
where the above coefficients are global sections of appropriate powers of the anti-canonical class on $B$, in particular, $a_n \in H^0(X, \mathcal{O}(- n K_B))$. Using \cite[Tables 2, 3]{Katz:2011qp}, we can easily engineer Kodaira $I_n$ and $I_m^*$-fiber types which are critical in constructing frozen singularities.

In order to interpolate between the Tate and Weierstrass form, we will rely on the results described in \cite{Lawrie:2012gg}. The Weierstrass polynomials corresponding to the Tate form~(\ref{eq:tate}) are given by the following
\begin{equation}\label{eq:tateweierstrass}
\begin{split}
f &= \frac{1}{48}(-(a_1^2 + 4a_2)^2 + 24 a_1 a_3 + 48 a_4) \\
g &= \frac{1}{864}(-(a_1^2 + 4a_2)^3 + 36 (a_1 a_3 + 2a_4)(a_1^2 + 4a_2) -216 (a_3^2 + 4a_6)) 
\end{split}
\end{equation}
The associated discriminant $\Delta$ can then be readily computed using~(\ref{eq:weierstrass}).

In practice, we will often engineer the desired singularities using a Tate form, and then achieve the desired model through a sequence of phase transitions by blowing up the base of the elliptic fibration. In order to preserve the Calabi-Yau condition, it is well understood that one needs to first engineer a point $p \in B$ such that the orders of vanishing of the Weierstrass polynomials satisfy $(ord_p(f), ord_p(g), ord_p(\Delta)) \geq (4,6,12)$. We may then blow up $B$ at $p$ and together with a change of variables, this yields a Calabi-Yau threefold over the blowup. Physically, such orders of vanishing indicate a conformal matter point, and blowing up corresponds to moving out on the tensor branch of the theory. For details, we refer to~\cite[Appendix A]{Apruzzi:2018oge}.

\subsection{Conventional embeddings of Frozen F-theory models}\label{sec:conventional}
\subsubsection{Aspinwall-Gross model}\label{sec:aspinwallgross}
%Realize frozen Aspinwall-Gross model via ordinary F-theory embedding on $\mathbb{F}_1$. How to get $\mathfrak{su}$? 
%\begin{itemize}
%\item
%Construct frozen Aspinwall-Gross using $\mathbb{F}_1$ obstructor instantons
%\item
%Comment on distinct enhancement limits in moduli space (no $\mathfrak{su}$ enhancement of fibers in ordinary F-theory on $\mathbb{F}_1$, no $\mathfrak{so}$ enhancement in frozen F-theory)
%\item
%Argue that $\mathfrak{su}$ imposed by anomaly constraints?
%\end{itemize}
%$I_{12}^*$ on $e$, $I_{48}^{ns}$ on $f$ 
We will take the base surface of the F-theory compactification to be the Hirzebruch surface $\mathbb{F}_4$ with a unique curve of self-intersection $-4$.
In~\cite[Section 5.1]{Bhardwaj:2018jgp}, a frozen variant of the F-theory compactification~\cite{Aspinwall:1996vc} on $\mathbb{F}_4$ with a tuned $I_{12}^*$ fiber along the zero section was given.

The massless spectrum consists of the following
\begin{table}[h!]
\centering
\begin{tabular}{c|ccccccccc}
\toprule
gauge algebra &  $\mathfrak{sp}(8) \oplus \bigoplus\limits_{i=1}^{12} \mathfrak{sp}(1)_i$\\
\midrule
charged hypermultiplets & $\mathbf{16} \otimes \mathbf{2}_i$, $1 \leq i \leq 12$\\
\midrule
uncharged hypermultiplets & 32\\
\midrule
tensor multiplets & 1\\
\bottomrule
\end{tabular}
\caption{Massless matter content of the generic Aspinwall-Gross model in the frozen phase. Throughout, we will omit the supergravity multiplet.}
\end{table}

We will reproduce the same massless content with a conventional F-theory compactification with base surface $B = \mathbb{F}_1$, i.e. the Hirzebruch surface with a unique curve of self-intersection $-1$.

We first consider the following tuning with a Tate model
\begin{equation}\label{eq:agunfrozen}
a_1 = 0, \ a_2 \sim p_2, \ a_3  = 0, \ a_4 \sim e^8 p_4, \ a_6 = 0
\end{equation}
where the $p_i$'s denote generic representatives of the corresponding global section. This particular tuning implies that we must have $p_4 \in \vert 12 f \vert$. This leads to the following Weierstrass model
\begin{align*}
f &= \frac{1}{3} (-p_2^2 + 3 e^8 p_4)\\
g&= -\frac{1}{27} p_2(2p_2^2 - 9e^8 p_4) \\
\Delta &= -e^{16} p_4^2 (p_2^2 - 4e^8 p_4)
\end{align*}
From the form of the discriminant $\Delta$, one verifies straightforwardly that this indeed reproduces the desired massless spectrum. 
\subsubsection{Compact $\mathbb{P}^1 \times \mathbb{P}^1$ with one frozen}\label{sec:1frozen}
In this section, we construct an unfrozen F-theory model reproducing the same massless content as that of the $\mathbb{P}^1 \times \mathbb{P}^1$ model with a single frozen $7$-brane exhibited in~\cite[Section 5.3]{Bhardwaj:2018jgp}. At a perturbative level, these coincide with the models of~\cite{Bianchi:1990tb,Gimon:1996rq}, obtained by compactifying type IIB on $T^2/\mathbb{Z}^2 \times T^2 / \mathbb{Z}^2$, and flipping a single $O7^-$ to an $O7^+$-plane. The model we consider is obtained from the Higgs branch of an F-theory realization of such a compactification.

The massless spectrum for this model is given in Table~\ref{tab:onefrozen}

\begin{table}[h!]
\centering
\begin{tabular}{c|ccccccccc}
\toprule
gauge algebra &  $\mathfrak{su}(8) \oplus \mathfrak{sp}(4)_1 \oplus \mathfrak{sp}(4)_2$\\
\midrule
charged hypermultiplets & $\mathbf{8} \otimes \mathbf{8}_1, \mathbf{8} \otimes \mathbf{8}_2, \mathbf{8}_1 \otimes \mathbf{8}_2 $\\
 & $2 \times \Lambda^2 \mathbf{8}$\\
\midrule
uncharged hypermultiplets & 16\\
\midrule
tensor multiplets & 5\\
\bottomrule
\end{tabular}
\caption{Massless content of a frozen F-theory model on $\mathbb{P}^1 \times \mathbb{P}^1$}
\label{tab:onefrozen}
\end{table}

In a conventional F-theory compactification, anomaly cancellation then implies that $\mathfrak{su}(8)$ must be supported on a $(0)$-curve and $\mathfrak{sp}(4)_i$ must be supported on $(-1)$ curves, with all three intersecting pairwise. We will reproduce this massless content with a base surface $B = \mathbb{P}^2$.

We first consider the following tuning with a Tate model on $\mathbb{P}^2$
\[
a_1 \sim p_1, \ a_2 \sim q^2 h_1^2, \ a_3 = 0, \ a_4 \sim h_1^4h_2^4 h_3^4, \ a_6 = 0
\]
where $h_1, h_2, h_3 \in \vert h \vert $ are general lines. From the degrees of the Tate coefficients, $p_1$ must be a degree $3$ polynomial, and $q$ is a degree $2$ polynomial. We may choose polynomials $p_1$ and $q$ such that they vanish simultaneously at a single point $u$ on the line $h_1$, two points $v_1, v_2$ on $h_2$, and two points $w_1, w_2$ on $h_3$. Moreover, we may choose such points away from the pairwise intersections $h_i \cap h_j$.

Using the ansatz given in equation~(\ref{eq:tateweierstrass}), this leads to the following Weierstrass model
\begin{align*}
f &= \frac{1}{48} (48 h_1^4 h_2^4 h_3^4 - p_1^4 - 8 h_1^2 p_1^2 q^2 - 16 h_1^4 q^4) \\
g &= \frac{1}{864} (p_1^2 + 4 h_1^2 q^2) (72 h_1^4 h_2^4 h_3^4 - p_1^4 - 8h_1^2 p_1^2 q^2 - 16 h_1^4 q^4) \\
\Delta &= \frac{1}{16} h_1^8 h_2^8 h_3^8 (8 h_1^2 h_2^2 h_3^2 - p_1^2 -  4 h_1^2 q^2) (8 h_1^2 h_2^2 h_3^2 + p_1^2 + 4 h_1^2 q^2)
\end{align*}
In particular, one verifies straightforwardly that this realizes an $\mathfrak{su}(8)$ algebra supported on $h_1$ and $\mathfrak{sp}(4)$ algebras supported on $h_2, h_3$, with all three intersecting pairwise. 

Next, we perform the required blowups to arrive at a model with precisely $5$ tensor multiplets. From the choices of $p_1$ and $q$, as well as the above Weierstrass model, we note that the five points $u, v_1, v_2, w_1, w_2$ are all $(4,6)$-points. Blowing up all five points, we conclude that the exceptional divisors do not support any additional gauge algebras, and that this gives the desired self-intersections required by anomaly cancellation.
%$\mathfrak{su}(8)$ on $(2,1)$, $\mathfrak{sp(4)}$ on $(0,1)$ in $\mathbb{P}^1 \times \mathbb{P}^1$: In discriminant $(16,8)$, $(0,8)$, $(8,8)$
\subsubsection{Compact $\mathbb{P}^1 \times \mathbb{P}^1$ with two frozen}\label{sec:2frozen}
Finally, in this section, we construct an unfrozen F-theory model reproducing the same massless content as that of the $\mathbb{P}^1 \times \mathbb{P}^1$ model with two frozen $7$-branes.

The massless spectrum now consists of the following
\begin{table}[h!]
\centering
\begin{tabular}{c | c}
\toprule
gauge algebra &  $\mathfrak{so}(8) \oplus \bigoplus\limits_{i=1}^{4} \mathfrak{sp}(2)_i $\\
\midrule
charged hypermultiplets & $\mathbf{4}_i \otimes \mathbf{4}_j, i < j$\\
\midrule
uncharged hypermultiplets & 13\\
\midrule
tensor multiplets & 8\\
\bottomrule
\end{tabular}
\caption{Massless content of a frozen F-theory model on $\mathbb{P}^1 \times \mathbb{P}^1$}
\label{tab:twofrozen}
\end{table}

We note that anomaly cancellation requires that the $\mathfrak{so}(8)$ factor is supported on a $(-4)$-curve, while the $\mathfrak{sp}(2)$ factors are all supported on $(-1)$-curves.

We consider the following tuning with a Tate model, again on $\mathbb{P}^2$
\[
a_1= 0 , \ a_2 \sim q^2 p_2, \ a_3 =  0 , \ a_4 \sim h_1^2 h_2^2 h_3^2 h_4^2 q^2, \ a_6 = 0
\]
where again, $h_1, h_2, h_3, h_4 \in \vert h \vert$ are four general lines on $\mathbb{P}^2$. Moreover, again from the Tate coefficients, $q$ and $p_2$ are general quadrics. Converting to a Weierstrass model, we obtain the following
\begin{align*}
f &= \frac{1}{3}q^2 (3h_1^2 h_2^2 h_3^2 h_4^2 - p_2^2 q^2) \\
g &= \frac{1}{27}p_2 q^4 (9h_1^2 h_2^2 h_3^2 h_4^2 - 2 p_2^2 q^2 ) \\
\Delta &= -h_1^4 h_2^4 h_3^4 h_4^4 q^6 (-2 h_1 h_2 h_3 h_4 + p_2 q)(2 h_1 h_2 h_3 h_4 + p_2 q)
\end{align*}
it follows straightforwardly that this realizes an $\mathfrak{sp}(2)$ factor on each line $h_i$ all intersecting pairwise, as well as an $\mathfrak{so}(8)$ factor supported on the quadric $q$.

As in Section~\ref{sec:1frozen}, we perform the required blowups to arrive at a model with $8$ tensor multiplets. We note from the form of the Weierstrass model, that the $8$ points of intersections $q \cap h_i$ for $i = 1,2,3,4$ are all $(4,6)$-points. Blowing up all these points, we again conclude that the exceptional divisors do not support any additional gauge algebras, and that this gives the desired self-intersections required by anomaly cancellation.

\subsection{Distinguishing frozen from unfrozen supergravity theories}\label{sec:distinguishing}
Our goal in Section~\ref{sec:conventional} was to construct, for every compact model of~\cite{Bhardwaj:2018jgp}, an F-theory compactification without $O7^+$-planes realizing the same massless spectrum. As a consequence, we conclude that for such models, it is a priori impossible to distinguish between the frozen and unfrozen phases purely from the non-abelian gauge algebras and massless matter content in the low energy effective physics. 

In this section, we will use our concrete F-theory models to propose an answer to Question~\ref{q:distinguish} for $6$d F-theory compactifications. We will exploit the distinction in the moduli spaces of two F-theory compactifications with the same massless spectrum, with one in the unfrozen phase, and with the other in the frozen phase. Our main claim is the following:
\begin{claim}\label{clm:limit}
Let $\mathcal{T}_1, \mathcal{T}_2$ be two $6d$ $\, \mathcal{N} = (1,0)$ supergravity theories with the same massless spectrum, engineered via a frozen/unfrozen F-theory compactification respectively. Assume that there exists an $\mathfrak{sp}(n)$ factor in both theories, which is localized on a frozen $7$-brane in the case of $\mathcal{T}_1$. 

Then $\mathcal{T}_2$ admits a limit in moduli space enhancing $\mathfrak{sp}(n)$ to $\mathfrak{so}(4n)$, while the analogous limit for $\mathcal{T}_1$ realizes a $(4,6)$-divisor.
\end{claim}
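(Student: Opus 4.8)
The plan is to reduce the entire comparison to a purely local statement about the orders of vanishing of $(f,g,\Delta)$ along the curve carrying the $\mathfrak{sp}(n)$ factor, and then to exhibit a single geometric tuning whose effect differs sharply between the two phases. First I would pin down the local Kodaira data in each phase. In the unfrozen theory $\mathcal{T}_2$, a symplectic algebra $\mathfrak{sp}(n)$ can only arise from a non-split $I_{2n}$ fibre (type $C_n$ in the classification of~\cite{Grassi:2014zxa}), so along its gauge curve $\Sigma$ one has $(\mathrm{ord}_\Sigma f,\mathrm{ord}_\Sigma g,\mathrm{ord}_\Sigma\Delta)=(0,0,2n)$; this is precisely the mechanism realized in Section~\ref{sec:conventional}, e.g. the non-split $I_{16}$ carrying $\mathfrak{sp}(8)$ over the $(-1)$-curve of $\mathbb{F}_1$. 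In $\mathcal{T}_1$, by contrast, the same algebra is carried by a frozen brane $\widehat{I}^*_{n+4}$, whose underlying Weierstrass model already has the $I^*$-type orders $(\mathrm{ord}_D f,\mathrm{ord}_D g,\mathrm{ord}_D\Delta)=(2,3,n+10)$ along the frozen divisor $D$ (freezing is a choice of $C_3$-period and does not alter the elliptic fibration). The essential asymmetry is that the frozen brane has already spent two orders of $f$ and three of $g$ that the unfrozen $I_{2n}$ has not.

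Second I would construct the enhancing limit in $\mathcal{T}_2$ by tuning the Weierstrass moduli so that $f$ and $g$ acquire two and three further orders of vanishing along $\Sigma$, i.e. $(0,0,2n)\mapsto(2,3,2n+2)$. Since $2n+2=(2n-4)+6$, the enhanced fibre is a split $I^*_{2n-4}$, carrying $\mathfrak{so}\big(2(2n-4)+8\big)=\mathfrak{so}(4n)$, which is the asserted enhancement. I would confirm that this is a genuine point of the existing moduli space by checking that $\Sigma$ can support $I^*_{2n-4}$—that the requisite vanishing orders are realizable given its divisor class and that the two extra units of discriminant can be drawn from the residual locus—and that the resulting un-Higgsing $\mathfrak{so}(4n)\supset\mathfrak{su}(2n)\supset\mathfrak{sp}(n)$ is consistent; the explicit $\mathbb{F}_1$ model of Section~\ref{sec:conventional}, whose $\mathfrak{sp}(8)$ on the non-split $I_{16}$ over $e$ enhances to $\mathfrak{so}(32)$ via $I^*_{12}$, serves as the existence witness for $n=8$.

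Third I would apply the identical local operation—forcing two further orders of $f$ and three of $g$ along the gauge divisor—to $\mathcal{T}_1$. Because $D$ already carries $(2,3,n+10)$, the same tuning produces orders $(\mathrm{ord}_D f,\mathrm{ord}_D g)=(4,6)$, i.e. a $(4,6)$-divisor, which is non-minimal and admits no crepant resolution compatible with the Calabi-Yau condition; this is the precise sense in which the analogous enhancement is obstructed in the frozen phase.

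The hard part, I expect, is twofold. One must make precise that the limit is genuinely \emph{the same} in both theories even though $\mathcal{T}_1$ and $\mathcal{T}_2$ typically live on different bases (for the Aspinwall--Gross case, the $(-4)$-curve of $\mathbb{F}_4$ versus the $(-1)$-curve of $\mathbb{F}_1$); I would resolve this by phrasing the enhancement as a base-independent local operation on the Kodaira type, namely the increment $(\mathrm{ord}\,f,\mathrm{ord}\,g)\mapsto(\mathrm{ord}\,f+2,\mathrm{ord}\,g+3)$ along the gauge curve. More delicate is the requirement that in $\mathcal{T}_2$ the enhanced locus be exactly a split $I^*_{2n-4}$ (so that $\mathrm{ord}_\Sigma\Delta=2n+2$ and the algebra is genuinely $\mathfrak{so}(4n)$, rather than a deeper or non-split degeneration), while in $\mathcal{T}_1$ the $(4,6)$ behaviour must occur along the \emph{entire} divisor $D$ rather than at isolated points—since $(4,6)$-points could be removed by blowing up and moving onto the tensor branch, only a genuine $(4,6)$-divisor constitutes an obstruction. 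Controlling the leading coefficients of $f$ and $g$ finely enough to establish both statements simultaneously is the step I expect to require the most care.
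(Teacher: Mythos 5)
Your proposal is correct and follows essentially the same route as the paper: the paper realizes the enhancement as the tuning $a_2 \mapsto e\,a_2'$ in its explicit Tate models, which is exactly your local increment $(\mathrm{ord}\,f,\mathrm{ord}\,g)\mapsto(\mathrm{ord}\,f+2,\mathrm{ord}\,g+3)$ along the gauge curve, taking $(0,0,2n)\to(2,3,2n+2)$ (split $I^*_{2n-4}$, hence $\mathfrak{so}(4n)$) in the unfrozen phase and $(2,3,n+10)\to(4,6,\cdot)$ in the frozen phase. The only difference is presentational: the paper verifies this on the explicit Aspinwall--Gross model (and notes the other examples work the same way), whereas you state the base-independent general mechanism directly, including the caveats about splitness and $(4,6)$-divisors versus $(4,6)$-points that the paper leaves implicit.
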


For simplicity, we will illustrate the claim in the example of Section~\ref{sec:aspinwallgross}, but we note that this can also be done for the examples computed in Sections~\ref{sec:1frozen} and \ref{sec:2frozen}. 

In Section~\ref{sec:aspinwallgross}, we considered the following Tate model on the base $B = \mathbb{F}_1$.
\[
a_1\sim p_1, \ a_2 \sim p_2, \ a_3  = 0, \ a_4 \sim e^8 p_4, \ a_6 = 0
\]
Letting $e,q$ denote the class of the exceptional and fiber classes respectively, we note that $p_1 \in \vert 2e + 2q \vert$, $p_2 \in \vert 4e + 4q \vert$, and $p_4 \in \vert 12 q \vert$. We consider the specialization of the Tate coefficient by the replacement
\[
a_2 \sim p_2 \mapsto a_2' \sim e p_2'
\]
inducing the following specialization of the corresponding Weierstrass model
\begin{align*}
f &= \frac{1}{3} (-p_2^2 + 3 e^8 p_4) \quad \mapsto \quad f' = \frac{1}{3} e^2 (-p_2^2 + 3 e^6 p_4)\\
g&= -\frac{1}{27} p_2(2p_2^2 - 9e^8 p_4)\quad \mapsto\quad g' = \frac{1}{27} e^3 p_2 (-2 p_2^2 + 9 e^6 p_4)  \\
\Delta &= -e^{16} p_4^2 (p_2^2 - 4e^8 p_4)\quad \mapsto\quad \Delta' = e^{18} p_4^2 (-p_2^2 + 4 e^6 p_4)
\end{align*}
In particular, we note that the $I_{16}^{ns}$ fiber of Section~\ref{sec:aspinwallgross} admits an enhancement to an $I_{12}^*$ fiber localized along the exceptional curve $e$ of $\mathbb{F}_1$, leading to an $\mathfrak{so}(32)$ algebra supported on $e$.

We now illustrate that a completely analogous enhancement in the frozen realization of the Aspinwall-Gross model leads to a $(4,6)$ divisor, and hence an infinite distance limit~\cite{Bershadsky:1996nh} in moduli space. We recall that the frozen Aspinwall-Gross model is given by the following Tate model on the base $B = \mathbb{F}_4$, as considered at the end of Section~\ref{sec:frozenconstruct}:
\[
a_1 = 0, \ a_2 \sim e p_2, \ a_3 = 0, \ a_4 \sim e^8 q^{48}, \ a_6 = 0
\]
which leads to the Weierstrass model:
\begin{align*}
f &= \frac{1}{3} e^2 (-p_2^2 + 3 e^6 q^{24})\\
g &= \frac{1}{27} e^3 p_2 (-2 p_2^2 + 9 e^6 q^{24})\\
\Delta &= e^{18} q^{48} (-p_2 + 2 e^3 q^{12}) (p_2 + 2 e^3 q^{12})
\end{align*}
We demand that $e$ supports an $\widehat{I}_{12}^*$-fiber, and hence supports a frozen $7$-brane. 
Subject to the prescription of Section~\ref{sec:frozenconstruct}, we define the gauge divisors
\[
\Sigma_1 = \frac{1}{2} e, \quad \Sigma_2 = 2q
\]
which leads to the massless spectrum of Section~\ref{sec:aspinwallgross} in the limit that all the $\mathfrak{sp}(1)$ singular fibers collide.

We now consider an analogous enhancement as performed in the unfrozen phase with the same massless spectrum; namely, we attempt to tune an additional orientifold onto the exceptional curve. This can be understood as attempting to tune the mass of open string states between the $7$-brane on the exceptional curve and the $7$-brane on the residual $I_1$ to zero. In the low energy effective physics, there is an identification of unlocalized neutral hypermultiplets between the two supergravity theories; the two tunings can then be identified with giving non-zero vacuum expectation values to the corresponding massless multiplets.

Geometrically, this corresponds to the following replacement:
\[
a_2 \sim ep_2 \mapsto a_2' \sim e^2 p_2'
\]
which leads to the Weierstrass model:
\begin{align*}
f&= \frac{1}{3} e^4 (-p_2^2 + 3 e^4 q^{24}) \\
g&= \frac{1}{27} e^6 p_2 (-2 p_2^2 + 9 e^4 q^{24}) \\
\Delta &= e^{20} q^{48} (-p_2 + 2 e^2 q^{12}) (p_2 + 2 e^2 q^{12})
\end{align*}
Clearly, this induces a $(4,6)$ divisor on the exceptional curve $e$, signaling an infinite distance limit. Such limits were first analyzed in~\cite{Cvetic:2022uuu}, which demonstrated how infinite distance limits can be realized in $8$d $\mathcal{N} = 1$ vacua through affinizations of the gauge algebra using string junctions. The appearance of $(4,6)$-points in $8$d F-theory lead to infinite dimensional Kac-Moody algebras, which result in decompactifications to either $9$ or $10$ dimensions. The analogous statements in $8$d F-theory with frozen singularities can then be obtained by embedding the $O7^+$-plane into the $9$d affine algebras. By studying our model in $6$-dimensions adiabatically, we expect that our infinite distance limit corresponds to the $9$d CHL vacua compactified on a $\mathbb{P}^1$ with an appropriate twist.
\section{Frozen LSTs, SCFTs and the Swampland}\label{sec:frozenscft}
In this section, we investigate whether $6d$ $\mathcal{N} = (1,0)$ supergravity theories which have not been constructed in ordinary F-theory, can be realized in the frozen phase of F-theory. Our main source of examples stems from a compilation~\cite{Bhardwaj:2019hhd} of local embeddings of new little string theories and superconformal field theories using frozen $7$-branes. Our goal is to study whether such local models can be completed into a compact F-theory model with frozen $7$-branes.

For unfrozen $6$d superconformal field theories, only a number of top-down constructions and constraints based on global symmetries~\cite{DelZotto:2014fia,Braun:2021sex} are known for the existence of compact embeddings. Nevertheless, we emphasize that for the new SCFT's realized in the frozen phase, most of the models do not admit a compact embedding based purely on their intersection constraints, independent of the ranks of the individual gauge algebras.

The gauge theory sectors summarized in~\cite[Section 3.1]{Bhardwaj:2019hhd} which do not admit a conventional F-theory embedding are given by the following:
\begin{enumerate}
\item
$\mathfrak{su}(n)$ with $1$ symmetric hypermultiplet and $n-8$ fundamental hypermultiplets
\item
$\mathfrak{su}(n)$ with $2n$ fundamental hypermultiplets, with a subset transforming under $\mathfrak{so}(m)$
\item
$\mathfrak{sp}(n)$ with $2n + 8$ fundamental hypermultiplets, with $3$ subsets transforming under three factors $\mathfrak{so}(m_1), \mathfrak{so}(m_2), \mathfrak{so}(m_3)$
\end{enumerate}
Our central claim can be summarized as follows:
\begin{claim}\label{clm:cases}
Case~$(1)$ never admits a compact embedding, but cases~$(2), (3)$ do admit compact embeddings in frozen F-theory models
\end{claim}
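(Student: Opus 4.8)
The plan is to handle the non-existence of case~(1) and the constructions for cases~(2) and~(3) by entirely different methods: a purely anomaly-theoretic obstruction for the former, and explicit Tate tunings followed by blowups for the latter, exactly in the style of Sections~\ref{sec:aspinwallgross}--\ref{sec:2frozen}.

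For case~(1), I would show that the obstruction is already visible at the level of the consistency conditions of Section~\ref{sec:frozenconstruct}, and is therefore insensitive to the choice of base $B$ or frozen locus $F$. Let $\Sigma$ be the gauge divisor carrying $\mathfrak{su}(n)$, and recall the group-theory coefficients $(A,B,C)$ controlling $\mathrm{tr}_R F^2$ and $\mathrm{tr}_R F^4$ for $R$ the fundamental, symmetric, and adjoint of $\mathfrak{su}(n)$, namely $(1,1,0)$, $(n{+}2,\,n{+}8,\,3)$ and $(2n,2n,6)$. First I would impose the quartic ($\mathrm{tr}F^4$) condition on the prescribed matter $\{1\text{ symmetric},\,(n-8)\text{ fundamentals}\}$:
\[
B_{\mathrm{adj}} - B_{\mathrm{sym}} - (n-8)\,B_{\mathrm{fund}} = 2n - (n+8) - (n-8) = 0,
\]
which is consistent only with \emph{no} adjoint hypermultiplets, so the gauge curve must be rational and condition~(2) of Section~\ref{sec:frozenconstruct} requires $(K+F)\cdot\Sigma + \Sigma\cdot\Sigma = -2$. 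On the other hand, the mixed and $(\mathrm{tr}F^2)^2$ conditions fix the two intersection numbers independently,
\[
(K+F)\cdot\Sigma = \tfrac{1}{6}\big(A_{\mathrm{adj}} - A_{\mathrm{sym}} - (n-8)A_{\mathrm{fund}}\big) = 1, \qquad \Sigma\cdot\Sigma = \tfrac{1}{3}\big(C_{\mathrm{sym}} - C_{\mathrm{adj}}\big) = -1,
\]
whose sum is $0 \neq -2$, contradicting condition~(2). Since this never refers to the geometry beyond the intersection pairing, it rules out a compact embedding for every $n\geq 8$. I would then emphasize why the local theory nonetheless exists: on the tensor branch the relevant curve is non-compact, so the adjunction/genus relation encoded in condition~(2) does not apply, and the obstruction is intrinsically one about compact completion.

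For cases~(2) and~(3) I would exhibit explicit models, following the two-step recipe of Section~\ref{sec:1frozen}: tune the required algebras on curves of the correct mutual intersection in a simple base (such as $\mathbb{P}^2$ or a Hirzebruch surface) with a Tate model, then blow up the resulting $(4,6)$-points to reach a compact base with the right tensor count. Concretely I would realize the gauge factor on an appropriate locus — an ordinary $I_n$ for the $\mathfrak{su}(n)$ of case~(2), and a frozen $\widehat{I}^*$ locus ($O7^+$ with $n$ D7-branes) carrying $\mathfrak{sp}(n)$ for case~(3), which is precisely the ingredient absent in the unfrozen phase — and place the flavor factors $\mathfrak{so}(m_i)$ on $7$-brane stacks intersecting it transversally, so that each intersection contributes the prescribed bifundamental while the remaining fundamentals come from the residual discriminant $\widetilde{\Delta}$. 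For each candidate I would then verify conditions (1)--(4): integrality and non-negativity of the intersection numbers, the genus-zero relation~(2) for every gauge divisor, matching of the localized charged matter against \cite[Table 3.1]{Bhardwaj:2018jgp}, and finally the gravitational bound $n_{H,ch} - n_V < 273$.

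The main obstacle will be this last step for cases~(2) and~(3). Since there is no symmetric representation, the quartic anomaly is automatically satisfied and conditions~(1)--(2) can always be met by the same genus-zero tunings as in Section~\ref{sec:conventional}; the real tension is that a large flavor factor $\mathfrak{so}(m_i)$ inflates $n_{H,ch}$, so the gravitational bound~(4) caps the admissible ranks, and for case~(3) the three factors $\mathfrak{so}(m_1),\mathfrak{so}(m_2),\mathfrak{so}(m_3)$ must be fit simultaneously onto a single $\mathfrak{sp}(n)$ curve with \emph{effective} divisor classes. I therefore expect the construction to succeed only within a bounded window of ranks; since the claim is an existence statement, exhibiting one valid configuration in this window suffices, and the blowup step is what lets the tensor count and effectiveness be reconciled.
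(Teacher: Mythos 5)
Your argument for case~(1) does not work. The mixed gauge--gravitational anomaly condition reads $-(K+F)\cdot\Sigma=\tfrac{\lambda}{6}\bigl(A_{\mathrm{adj}}-\sum_R x_R A_R\bigr)$, so your computation actually gives $(K+F)\cdot\Sigma=-1$, not $+1$; combined with $\Sigma\cdot\Sigma=-1$ this yields $(K+F)\cdot\Sigma+\Sigma\cdot\Sigma=-2$, which \emph{satisfies} condition~(2) rather than contradicting it. This is not a repairable slip: $\mathfrak{su}(n)$ with one symmetric and $n-8$ fundamentals is a consistent local (SCFT) building block, so its anomaly vector $(a\cdot b,\,b\cdot b)=(-1,-1)$ is automatically that of a rational $(-1)$-curve, and no purely Lie-algebraic computation of the $(A,B,C)$ coefficients can detect the obstruction, which concerns only the \emph{compact} completion. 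The paper's obstruction is genuinely global and geometric: the symmetric hypermultiplet forces the $I_n$ curve to meet the frozen $\widehat{I}_4^*$ divisor tangentially; contracting the $(-1)$-curve produces a cuspidal anticanonical curve of self-intersection zero whose linear system fibers the base elliptically over $\mathbb{P}^1$, and a monodromy computation (the restriction of $9g/2f$ to the exceptional curve of the blowup of the cusp) shows the relevant fiber must be non-split, i.e.\ it carries $\mathfrak{sp}$ rather than $\mathfrak{su}$, killing the symmetric matter. (The paper explicitly notes that the alternative route that is independent of the Green--Schwarz conditions is the string-probe unitarity argument of Tarazi--Vafa, not the anomaly lattice.)

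For cases~(2) and~(3) your plan is in the right spirit --- Tate tunings followed by blowups of $(4,6)$-points is exactly what the paper does on $\mathbb{F}_2$ and $\mathbb{P}^2$ --- but as written it is a sketch rather than a proof, and the mechanism that selects the admissible ranks is misidentified. The paper first applies the Kodaira positivity condition of Lemma~\ref{lem:kodaira}, $(-12K_B-\sum_i m_iD_i)\cdot D_j\ge 0$, to the universal sub-blocks; this pins the configurations down to the unique solutions $m=16,\ n=8,\ k=4$ for case~(2) and $a=b=c=16,\ l=m=4,\ k=12$ for case~(3), and only then are those specific models constructed explicitly (a residual infinite family is eliminated by the string-unitarity central-charge bound, not by the gravitational bound $n_{H,ch}-n_V<273$ that you invoke). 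Without the Kodaira step you have no control over which members of cases~(2),(3) to aim for, and without exhibiting the tunings the existence half of the claim remains unestablished.
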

In Section~\ref{sec:6dtheories}, we review our conventions for specifying gauge theory chains, and we summarize a critical Lemma. In Section~\ref{sec:swampland}, we use the Kodaira condition to demonstrate that most of the above local embeddings can not be completed into a global model. In Section~\ref{sec:constructnew}, we demonstrate the existence of new $6d$ $\mathcal{N} = (1,0)$ supergravity theories via explicit constructions with frozen $7$-branes in F-theory. In the process, we explicitly realize an argument~\cite{Tarazi:2021duw} exploiting unitarity of string worldsheet theories in a compact F-theory model, which may be of independent mathematical interest.
\subsection{6d $\mathcal{N} = (1,0)$ theories}\label{sec:6dtheories}
We will primarily be interested in whether certain $6d$ $\mathcal{N} = 1$ gauge theory sectors can be completed into supergravity models. In this section, we review our conventions and our central argument.

%We note that results are complementary to~\cite{Tarazi:2021duw}. {\color{red}Comment on choice of lattice in string construction}
Let $B$ be a smooth complex algebraic surface. We will specify a local gauge theory sector through its non-abelian gauge algebra and matter content. This will be portrayed through intersections between singular Kodaira fibers yielding non-abelian gauge algebras with diagrams of the following type
%\[
%\begin{tikzcd}[row sep = -4pt]
%K_1(n_1)\arrow[dash]{r}& K_2(n_2) \\
%C & D
%\end{tikzcd}
%\]
\[
\xymatrix@R = .1em{
 K_1(n_1)\ar@{-} [r]&  K_2\, (n_2)\\
    C & D
    }
\]
where $K_i$ denote the associated Kodaira type, $n_i$ denotes the negative of the self-intersection of the curve on which the fiber is supported, a line between two Kodaira fibers indicates that corresponding curves intersect transversely in $B$, and $C$ and $D$ denote the curves of support. If such a gauge theory sector indeed admits an F-theory embedding, then it must satisfy the following necessary condition.
\begin{lemma}\label{lem:kodaira}
Let $\Delta \subset B$ be the discriminant locus of an elliptic fibration, admitting a factorization into irreducible components
\[
\Delta = z_1^{m_1} \cdot \ldots \cdot z_k^{m_k} \widetilde{\Delta}
\]
with associated divisors $D_1, \ldots D_k$. Then every divisor $D_i$ satisfies the condition
\[
(-12K_B - m_i D_i) \cdot D_i \geq 0
\]
\end{lemma}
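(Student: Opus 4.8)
The plan is to translate the statement into a condition about effective divisor classes and then invoke the fact that the discriminant class is fixed by the Calabi-Yau/Weierstrass data. The key observation is that the total discriminant $\Delta$ of a Weierstrass model \eqref{eq:weierstrass} lies in the class $-12K_B$, since $\Delta = 4f^3 + 27g^2$ with $f \in |-4K_B|$ and $g \in |-6K_B|$, so $\Delta^3 \in |-12K_B|$ as a section of $\mathcal{O}(-12K_B)$. Thus, as divisor classes, I would write
\[
-12K_B = \sum_{j=1}^{k} m_j D_j + \widetilde{\Delta},
\]
where $\widetilde{\Delta}$ denotes the class of the residual discriminant. The residual discriminant is by construction an effective curve: it is the vanishing locus of the genuine remaining factor of $\Delta$ after removing the components $z_1^{m_1}, \ldots, z_k^{m_k}$, hence it is represented by an honest (possibly reducible, possibly nonreduced) algebraic curve in $B$.

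First I would isolate the class of $\widetilde{\Delta}$ by rearranging the above relation to obtain $\widetilde{\Delta} = -12K_B - \sum_{j} m_j D_j$. Next, I would intersect both sides with the fixed irreducible curve $D_i$. This gives
\[
\widetilde{\Delta} \cdot D_i = \Big(-12K_B - \sum_{j} m_j D_j\Big)\cdot D_i.
\]
The crucial point is to argue that the left-hand side is nonnegative. Since $D_i$ is irreducible and $\widetilde{\Delta}$ is an effective curve, the intersection number $\widetilde{\Delta}\cdot D_i$ is nonnegative \emph{provided} $D_i$ is not a component of $\widetilde{\Delta}$; but this is guaranteed precisely because we have extracted every copy of $D_i$ from $\Delta$ when we wrote the factor $z_i^{m_i}$, so the residual discriminant shares no component with $D_i$. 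An effective curve meeting an irreducible curve that is not among its components has nonnegative intersection, giving $\widetilde{\Delta}\cdot D_i \geq 0$.

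The remaining step is bookkeeping: the right-hand side must be regrouped to match the desired inequality. I would separate the $j=i$ term from the sum, writing
\[
\Big(-12K_B - \sum_{j\neq i} m_j D_j - m_i D_i\Big)\cdot D_i \geq 0,
\]
and then note that for $j \neq i$ the curves $D_j$ and $D_i$ are distinct irreducible components of the discriminant, hence $D_j \cdot D_i \geq 0$, so each subtracted term $m_j (D_j \cdot D_i)$ with $j \neq i$ is nonpositive. Moving these nonnegative contributions to improve the bound, I obtain
\[
(-12K_B - m_i D_i)\cdot D_i \;\geq\; \sum_{j\neq i} m_j (D_j \cdot D_i) + \widetilde{\Delta}\cdot D_i \;\geq\; 0,
\]
which is exactly the claimed inequality. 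The main obstacle I anticipate is the justification that $\widetilde{\Delta}$ is genuinely effective and shares no component with $D_i$ after the factorization; this is where one must be careful that the factorization $\Delta = z_1^{m_1}\cdots z_k^{m_k}\widetilde{\Delta}$ is taken with all copies of each $z_i$ fully extracted, so that $\widetilde{\Delta}$ is coprime to each $z_i$ and its intersection with $D_i$ is a count of honest transverse or tangential meeting points, necessarily nonnegative.
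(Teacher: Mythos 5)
Your proof is correct and is the standard argument: the paper itself states Lemma~\ref{lem:kodaira} without proof, and the way it is subsequently applied (computing $\widetilde{\Delta}\cdot D_i$ and demanding nonnegativity) shows that the intended justification is exactly yours, namely that $\Delta\in|-12K_B|$, the residual discriminant $\widetilde{\Delta}=-12K_B-\sum_j m_jD_j$ is effective with no component equal to $D_i$, and $D_j\cdot D_i\ge 0$ for $j\ne i$. One typo to fix: it is $\Delta$ itself, not ``$\Delta^3$'', that lies in $|-12K_B|$ (since $f^3$ and $g^2$ are both sections of $\mathcal{O}(-12K_B)$), and your final chain of inequalities is actually an equality in the first step.
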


\subsection{F-theory swampland}\label{sec:swampland}
In this section, we prove that most of the new superconformal field theories (SCFTs) and little string theories (LSTs) engineered via frozen $7$-branes in~\cite{Bhardwaj:2019hhd} do not admit a compact embedding. Our main technique will be to analyze the intersecting Kodaira fibers from  \cite{Bhardwaj:2019hhd}, focusing on particular sub-blocks of such constructions. We begin by assuming that there exists an embedding into a compact, frozen F-theory model subject to the rules of \cite{Bhardwaj:2018jgp}. In particular, when each $\hat{I}_{4+n}^*$-fiber is flipped to an $I_{4+n}^*$-fiber, the resulting configuration should still be a consistent $6$d F-theory model. Our main technique will be to analyze the contradictions in the resulting model by applying Lemma~\ref{lem:kodaira}.

In the following three sections, we will analyze three particular sub-blocks which feature in every new frozen $7$-brane construction in~\cite{Bhardwaj:2019hhd}. We will find that most of these sub-blocks do not admit a compact embedding.
\subsubsection{Tangential Intersections}\label{subsec:tangent}
We begin by demonstrating that a gauge theory sector which engineers an $\mathfrak{su}(n)$ gauge theory with $1$ symmetric tensor and $n - 8$ fundamentals does not admit a compact embedding. We remark that this was demonstrated earlier in~\cite{Tarazi:2021duw}, though such arguments are not a priori, obviously independent of the choice of basis for the anomaly lattice. On the other hand, our argument is geometric and assumes an F-theory realization, and so we view our results as complementary.

We engineer this gauge theory using the following configuration of singular fibers
\begin{comment}
\[
\begin{tikzcd}[row sep = -4pt]
\widehat{I_4^*}\, (4) \arrow[equal,"t"]{r}& I_n^s\, (1) \\
C & D
\end{tikzcd}
\]
\end{comment}
\[
\xymatrix@R = .1em{
    \widehat{I_4^*}\,(4)\ar^t@{=} [r]&  I_n^s \,(1)\\
    C & D
    }
\]
where the double line denotes a tangential intersection. Subject to the rules of Section~\ref{sec:frozenconstruct}, there is a single gauge divisor $\Sigma = C$ supporting an $\mathfrak{su}(n)$ gauge algebra. In particular, this configuration leads to an $\mathfrak{su}(n)$ gauge theory with $1$ symmetric and $n-8$ fundamentals.

We claim that the above configuration cannot be realized on any compact algebraic surface $B$, but for a rather subtle reason. Indeed, let $C, D \subset B$ be $(-4)$ and $(-1)$ curves respectively, intersecting tangentially at a single point. Consider a contraction $\pi \colon B \rightarrow B_0$, blowing down the $(-1)$-curve $D$. Then $\pi(C) \subset B_0$ is a cuspidal curve with self-intersection $0$. In particular, its corresponding linear system induces a morphism $| \pi(C) | \colon B_0 \rightarrow \mathbb{P}^1$, with generic fiber a smooth elliptic curve. On the other hand, any elliptic ruled surface giving a $6$d $(1,0)$ F-theory model must be a rational elliptic surface, and we claim that this cannot be the case.

To see the claim, we note that the anti-canonical bundle of $B_0$ satisfies $-K_{B_{0}}  = q$, with $q$ the fiber class. Let $f_0$ denote the cuspidal curve $\pi(C)$. Then the induced Weierstrass model on $B_0$ must satisfy the following
\[
f = f_0^2 \widetilde{f}, \ g = f_0^3 \widetilde{g}, \ \Delta = f_0^{10} \widetilde{\Delta} 
\]
where the vanishing loci of $\widetilde{f}, \widetilde{g}, \widetilde{\Delta}$ are localized on distinct fibers of $B_0$ not intersecting $f_0$. Blowing up the singular point on $f_0$ with exceptional divisor $e$, the proper transform is given by the class $\widetilde{f}_0 = \pi^* f_0 - 2 e$. We claim that the singular fiber localized on the exceptional divisor $e$ must be non-split.

Indeed, the splitness condition is given by checking the quantity
\[
\frac{9g}{2f} = \frac{\widetilde{f}_0^3 \widetilde{g}}{\widetilde{f}_0^2 \widetilde{f}}\bigg|_{e = 0} = \widetilde{f}_0 \frac{\widetilde{g}}{\widetilde{f}}\bigg|_{e= 0}
\]
where $\widetilde{f}_0$ denotes the proper transform of the class on the blowup. On the other hand, as the vanishing loci of $\widetilde{g}, \widetilde{f}$ correspond to distinct fibers on $B_0$, their proper transform on $B$ evaluated along $e = 0$ cannot contain any powers of the proper transform $f_0$. Thus we conclude that this quantity cannot factorize, and the singular fiber localized on the $(-1)$-curve must be non-split.

We further note that a compact engineering of such a model is possible in the case $n = 8$ realizing the following intersection.
\begin{comment}
\begin{equation*}
\begin{tikzcd}
\hat{I_4^*}\, (4) \arrow[equal,"t"]{r}& I_8^{ns}\, (1)
\end{tikzcd}
\end{equation*}
\end{comment}
\[
\xymatrix@R = .1em{
    \widehat{I_4^*}\,(4)\ar^t@{=} [r]&  I_8^s \,(1)
    }
\]
\subsubsection{$SU(n)$ on $(-2)$-curves}\label{subsec:triple}
We next consider case~$(2)$ of Claim~\ref{clm:cases}. A simple example of a configuration of singular fibers giving rise to such a gauge theory is given by the following
\begin{comment}
\[
\begin{tikzcd}
\widehat{I_4^*}(4)\arrow[dash]{r}& I_{m}^{ns}(1)(C) \arrow[dash]{r} \arrow[dash]{d}& I_n^{s}(2)(D) \\
&\hat{I}_{4}^*(4)&
\end{tikzcd}
\]
\end{comment}
\[
\xymatrix{
\widehat{I_4^*}(4)\ar@{-}[r]& I_{m}^{ns}(1)(C) \ar@{-}[r] \ar@{-}[d]& I_n^{s}(2)(D) \\
&\hat{I}_{4}^*(4)&
    }
\]
Using the rules of Section~\ref{sec:frozenconstruct}, we consider the gauge divisors $\Sigma_1 = 2C$, $\Sigma_2 = D$, supporting $\mathfrak{so}(m)$ and $\mathfrak{su}(n)$ respectively. Altogether, this leads to the following massless multiplets:
\begin{table}[h!]
\centering
\begin{tabular}{c | c}
\toprule
gauge algebra &  $\mathfrak{so}(m) \oplus \mathfrak{su}(n) $\\
\midrule
charged hypermultiplets & $ \mathbf{m} \otimes \mathbf{n} $\\ 
& $(2n - m)\times \mathbf{n}$\\ & $(m - 8 -n) \times \mathbf{m}$ \\

\bottomrule
\end{tabular}
\caption{Massless content of a $6d$ $\mathcal{N} = (1,0)$ gauge theory which cannot be engineered in the unfrozen phase of F-theory}
\label{tab:}
\end{table}

We briefly review the argument for why such a gauge theory cannot be realized in the unfrozen phase of F-theory. In a conventional F-theory compactification, we must have the $\mathfrak{so}(m)$ supported on a $(-4)$ curve $C_1$ and the $\mathfrak{su}(n)$ supported on a $(-2)$-curve $C_2$, such that $C_1$ and $C_2$ intersect transversely. In particular, the Weierstrass polynomial $f$ must take the following form
\[
f = c_1^2 \widetilde{f}, \quad \widetilde{f} \in -4K - 2C_1
\]
where $c_1 = 0$ defines the curve $C_1$. On the other hand, this implies that $\widetilde{f} \cdot C_2 = (-4K - 2C_1) \cdot C_2 = -2$, and thus $f$ must vanish to order at least $1$ along $C_2$. Together with a similar argument for the Weierstrass polynomial $g$, we conclude that this contradicts the assumption of $C_2$ supporting an $\mathfrak{su}(n)$ algebra.

Finally, we argue that many SCFTs and LSTs constructed in the frozen phase do not admit a compact embedding. We consider a more generalized sub-block given by the following
 %In lieu of a geometric analysis, we will use gravitational anomaly cancellation together with consistency in the unfrozen F-theory model as the main input. We recall that this implies the following constraint:
%\[
%H - V + 29T = 273
%\]
\begin{comment}
\[
\begin{tikzcd}
I_k^*(4)(D_2)\arrow[dash]{r}& I_{m}^{ns}(1)(D_3) \arrow[dash]{r} \arrow[dash]{d}& I_n^{s}(2)(D_4)\\
&\widehat{I_4^*}(4)(D_1)&
\end{tikzcd}
\]
\end{comment}
\[
\xymatrix{
I_k^*(4)(D_2)\ar@{-}[r] & I_{m}^{ns}(1)(D_3)\ar@{-}[r] \ar@{-}[d]& I_n^{s}(2)(D_4)\\
&\widehat{I_4^*}(4)(D_1)&
    }
\]
The discriminant must then factorize into the following form:
\[
\Delta = z_1^{10} z_2^{6+k}z_3^{m} z_4^n\widetilde{\Delta}
\]
The intersection number of the residual discriminant with the divisor $D_1$ is given by:
\begin{align*}
\widetilde{\Delta} \cdot D_1 &= (-12K - 10D_1 - (6+k) D_2 -m D_3 - nD_4) \cdot D_1 \\
&= -24 + 40 + 0 - m + 0 = 16 - m
\end{align*}
Similarly, we obtain
\[
\widetilde{\Delta} \cdot D_2 = 4k - m, \quad \widetilde{\Delta} \cdot D_3 = -4 - k + m - n, \quad \widetilde{\Delta} \cdot D_4 = -m + 2n
\]
Applying Lemma~\ref{lem:kodaira}, we demand that $\widetilde{\Delta} \cdot D_i \geq 0$ for all $i$, and this implies that the only valid solution is $m = 16, n = 8, k = 4$, in which case $\widetilde{\Delta} \cdot D_i = 0$ for all $i$.% {\color{red} double check this one last time}

In Section~\ref{sec:susoint}, we will explicitly construct a compact embedding of such a model.
%In the unfrozen setting, we must have the following gauge algebras
%\[
%\begin{tikzcd}
%\mathfrak{so}(16) \arrow[dash]{r} & \mathfrak{sp}(8) \arrow[dash]{r}\arrow[dash]{r} \arrow[dash]{d}& \mathfrak{su}(8) \\
%& \mathfrak{so}(16)&
%\end{tikzcd}
%\]
%where a line between $\mathfrak{so}$ and $\mathfrak{sp}$ factors denote a half-bifundamental, and the dash between $\mathfrak{sp}$ and $\mathfrak{su}$ factor denotes a single bifundamental. A straightforward calculation implies that there can only be bi-fundamental matter.

%On the other hand, two blowdowns of the $(-1)$ and $(-2)$ curve successively, yields an F-theory base with two $(-2)$ curves intersecting tangentially. This cannot happen on a blowup of $\mathbb{F}_n$ at three points, and we conclude {\color{red} add details here}. Moreover, we note that this gauge theory cannot be coupled to any other gauge theory sector by sharing bi-fundamental matter, and hence cannot be embedded into any compact F-theory model. {\color{red} add details here}
\subsubsection{$SU(n)$ and triple intersections}
In this section, we consider case~$(3)$ of Claim~\ref{clm:cases}. We first illustrate a simple configuration of singular fibers giving rise to such a gauge theory:
\begin{comment}
\[
\begin{tikzcd}
&&\widehat{I_{4}^*} (4) (D_7)\arrow[dash]{d} && \\ 
&&I_{c}^{ns}(1)(D_6) \arrow[dash]{d} && \\
\widehat{I_4^*} (4)(D_1) \arrow[dash]{r} & I_{a}^{ns} (1)(D_2) \arrow[dash]{r}& \widehat{I_k^*}(4)(D_3) \arrow[dash]{r}&I_{b}^{ns}(1)(D_4) \arrow[dash]{r} & \widehat{I_{4}^*}(4)(D_5)\\
\end{tikzcd}
\]
\end{comment}
\[
\xymatrix{
&&\widehat{I_{4}^*} (4) (D_7)\ar@{-}[d]&& \\ 
&&I_{c}^{ns}(1)(D_6)\ar@{-}[d] && \\
\widehat{I_4^*} (4)(D_1) \ar@{-}[r] & I_{a}^{ns} (1)(D_2) \ar@{-}[r]& \widehat{I_k^*}(4)(D_3) \ar@{-}[r]&I_{b}^{ns}(1)(D_4) \ar@{-}[r] & \widehat{I_{4}^*}(4)(D_5)\\
    }
\]
Again, using the rules of Section~\ref{sec:frozenconstruct}, we consider the assignment of gauge divisors 
\[
\Sigma_1 = 2D_2, \ \Sigma_2 = \frac{1}{2}D_3, \ \Sigma_3 = 2D_4, \ \Sigma_4 = 2D_6
\]
supporting the gauge algebras $\mathfrak{so}(a), \mathfrak{sp}(k-4), \mathfrak{so}(b), \mathfrak{so}(c)$ respectively. Altogether, this leads to the massless content in Table~\ref{tab:triple}.
\begin{table}[h!]
\centering
\begin{tabular}{c | c}
\toprule
gauge algebra &  $ \mathfrak{so}(a) \oplus \mathfrak{so}(b) \oplus \mathfrak{so}(c) \oplus \mathfrak{sp}(k-4)$\\
\midrule
charged hypermultiplets & $\frac{1}{2}\times (\mathbf{a}, \mathbf{2k-8}),\ \frac{1}{2}\times(\mathbf{b}, \mathbf{2k-8}),\ \frac{1}{2}\times(\mathbf{c}, \mathbf{2k-8}) $\\
&$ ((a - 8) - (k-4) )\times \mathbf{a}, \ ((b - 8) - (k-4))\times \mathbf{b}, \ ( (c - 8) - (k-4))\times \mathbf{c}$ \\
&$  (2(k-4) + 8 - \frac{a+b+c}{2} )\times \mathbf{2k-8} $ \\
\bottomrule
\end{tabular}
\caption{Massless content of a $6d$ $\mathcal{N} = (1,0)$ gauge theory which cannot be engineered in the unfrozen phase of F-theory}
\label{tab:triple}
\end{table}
%\begin{itemize}
%\item
%$\mathfrak{so}(a) \times \mathfrak{so}(b) \times \mathfrak{so}(c) \times \mathfrak{sp}(k-4)$
%\item
%half-bifundamentals $\frac{1}{2}(\mathbf{a}, \mathbf{2k-8}),\frac{1}{2}(\mathbf{b}, \mathbf{2k-8}),\frac{1}{2}(\mathbf{c}, \mathbf{2k-8})$, $(a - 8) - (k-4)$ fundamentals $\mathbf{a}$,$(b - 8) - (k-4)$ fundamentals $\mathbf{b}$, $(c - 8) - (k-4)$ fundamentals $\mathbf{c}$, and $2(k-4) + 8 - \frac{a+b+c}{2}$ fundamenetals $\mathbf{2k-8}$
%\end{itemize}
The non-existence of a conventional F-theory realization follows by a similar argument as in the previous section, which we briefly illustrate. To realize such massless content in the unfrozen phase of F-theory, we must have $\mathfrak{so}(a),\mathfrak{so}(b),\mathfrak{so}(c)$ supported on individual $(-4)$-curves $C_1, C_2, C_3$ transversely intersecting the curve $C_4$ which supports $\mathfrak{sp}(k-4)$. In particular, the Weierstrass polynomial $f$ must take the following form
\[
f = c_1^2 c_2^2 c_3^2 \widetilde{f}, \quad \widetilde{f} \in -4K - 2C_1 - 2C_2 - 2C_3
\]
In particular, we have the intersection $\widetilde{f} \cdot C_4 = (-4K - 2C_1 - 2C_2 - 2C_3) \cdot C_4 = -2$, and hence $\widetilde{f}$ must vanish to at least order $2$ along $C_4$. Arguing similarly with the Weierstrass polynomial $g$, this yields a contradiction with the assumption that $C_4$ supports an $I_{2k-8}^{ns}$-fiber.

Finally, we consider the general sub-block given by the following intersections:
\begin{comment}
\[
\begin{tikzcd}
&&I_{m}^* (4) (D_7)\arrow[dash]{d} && \\ 
&&I_{c}^{ns}(1)(D_6) \arrow[dash]{d} && \\
\hat{I}_4^* (4)(D_1) \arrow[dash]{r} & I_{a}^{ns} (1)(D_2) \arrow[dash]{r}& I_{k}^*(4)(D_3) \arrow[dash]{r}&I_{b}^{ns}(1)(D_4) \arrow[dash]{r} & I_{l}^*(4)(D_5)\\
\end{tikzcd}
\]
\end{comment}
\[
\xymatrix{
&&I_{m}^* (4) (D_7)\ar@{-}[d] && \\ 
&&I_{c}^{ns}(1)(D_6) \ar@{-}[d]&& \\
\hat{I}_4^* (4)(D_1) \ar@{-}[r]& I_{a}^{ns} (1)(D_2) \ar@{-}[r]& I_{k}^*(4)(D_3) \ar@{-}[r]&I_{b}^{ns}(1)(D_4) \ar@{-}[r]& I_{l}^*(4)(D_5)\\
    }
\]
We analyze the discriminant locus, which takes the following form
\[
\Delta = z_1^{10} z_2^{a} z_{3}^{6+k} z_4^{b} z_5^{6+l} z_6^{c} z_7^{6+m} \widetilde{\Delta}
\]
Computing the intersection of $\widetilde{\Delta}$ with $D_1$, we obtain the following:
\begin{align*}
\widetilde{\Delta} \cdot D_1 &= (-12K - 10 D_1 - a D_2 - (6+k)D_3 - b D_4 - (6+l)D_5 - c D_6 - (6+m) D_7 ) \cdot D_1 \\
&= 16 - a
\end{align*}
Similarly, we obtain the following intersections
\begin{align*}
\widetilde{\Delta} \cdot D_2 &= -4 + a - k, \quad \widetilde{\Delta} \cdot D_3  = -a - b - c + 4 k, \quad \widetilde{\Delta} \cdot D_4 =  b - k - l\\
\widetilde{\Delta} \cdot D_5 &= -b + 4l, \quad \widetilde{\Delta} \cdot D_6 =c - k - m, \quad \widetilde{\Delta} \cdot D_7 =-c + 4 m
\end{align*}
Applying Lemma~\ref{lem:kodaira} again and imposing the non-negativity of all the above inequalities implies the following unique solution $a = b = c = 16,\  l = m = 4, \ k = 12$. 

In Section~\ref{sec:3soint}, we will explicitly construct a compact embedding of such a model.
%On the other hand, to satisfy gauge anomaly cancellation, it must be the case that $\mathfrak{so}(16)$ carries precisely $8$ fundamentals. On the other hand, anomaly cancellation also requires that there is a half bi-fundamental at the intersection with each $\mathfrak{sp}$ factor. This is clearly in contradiction at the middle node.
\subsection{New massless spectra in $6$d F-theory}\label{sec:constructnew}
In the previous section, we focused on ruling out infinite families of potential compact embeddings of frozen $7$-brane configurations by applying Lemma~\ref{lem:kodaira} to universal sub-blocks of such constructions. Nevertheless, we discovered special cases which were consistent with such a condition.

In this section, we will demonstrate that such cases do indeed admit compact embeddings with frozen $7$-branes, leading to $6d$ $\mathcal{N} = (1,0)$ vacua with new massless gauge and matter content. Along the way, we find a potential infinite family in Section~\ref{sec:infinitefam}, which we rule out by relying on string unitarity arguments introduced in~\cite{Kim:2019vuc,Tarazi:2021duw}. As a consequence, we find a concrete geometric realization of their argument, which may be of independent mathematical interest. 

\subsubsection{Intersections of $\mathfrak{su}(n)$ with $\mathfrak{so}(m)$}\label{sec:susoint}
We first consider the special case realized in Section~\ref{subsec:triple}, given by the following configuration
\begin{equation}\label{frozendiagram}
%\begin{tikzcd}
%\widehat{I}_4^*(4)\arrow[dash]{r}& I_{16}^{ns}(1) \arrow[dash]{r} \arrow[dash]{d}& I_8^{s}(2) \\
%&\widehat{I}_{4}^*(4)&
%\end{tikzcd}
\xymatrix{
\widehat{I}_4^*(4)\ar@{-}[r]& I_{16}^{ns}(1) \ar@{-}[r] \ar@{-}[d]& I_8^{s}(2) \\
&\widehat{I}_{4}^*(4)&
    }
\end{equation}
which realizes the following chain of gauge algebras
\[
%\begin{tikzcd}
%\mathfrak{so}(8)(4) \arrow[dash]{r} & \mathfrak{su}(8)(2)
%\end{tikzcd}
\xymatrix{
\mathfrak{so}(8)(4)\ar@{-}[r]& \mathfrak{su}(8)(2)
    }
\]

We will construct a compact realization of such a model using the Tate form described in Section~\ref{sec:tate}. Let $B$ be the Hirzebruch surface $\mathbb{F}_2$ with a unique $(-2)$-curve $E$ defined by $e = 0$. Recall that the anti-canonical class satisfies $-K_B = 2e + 4f$ and consider the following tuning
\[
a_1 \sim q e, \ a_2 \sim q e p_2, \ a_3 = 0, \ a_4 \sim e^4 q^4, \  a_6 = 0
\]
where $q = 0$ defines a general curve $Q$ in the linear system $\vert e + 4f \vert$. In particular, we note that the polynomial $p_2$ must take values in the system $\vert 2e + 4f\vert$. This leads to the following Weierstrass model:
\begin{align*}
f &= \frac{1}{48} e^2 q^2 (-16 p_2^2 - 8 e p_2 q + 47 e^2 q^2) \\
g&= \frac{1}{864} e^3 q^3 (4 p_2 + e q) (-16 p_2^2 - 8 e p_2 q + 71 e^2 q^2) \\
\Delta &= \frac{1}{16} e^{10} q^{10} (-4 p_2 + 7 e q) (4 p_2 + 9 e q)
\end{align*}
We may assume that $Q$ defines a curve in $\mathbb{F}_2$ tangentially intersecting the zero section $E$. In addition, we note that generically, the curve defined by $p_2  = 0$ intersects with $Q$ at $8$ points, as we have $(2e + 4f) \cdot (e+4f) = 8$. From the Weierstrass model, we conclude that there exists $8$ $(4,6)$ points localized along the intersections $Q \cap p_2$, as well as a single $(4,6)$-point localized at the tangential point of intersection $Q \cap E$.

We perform the following phase transitions, moving along the tensor branch of an F-theory compactification specified by the above Weierstrass model. We first blowup all $8$ $(4,6)$ points corresponding to the intersection of $p_2 = 0$ with $Q$. As $Q$ has self-intersection $6$ on $\mathbb{F}_2$, its proper transform has self-intersection $0$. Next, we perform a blowup at the intersection $Q \cap E$, leading to an exceptional curve $E_1$ supporting an $I_8^{ns}$-fiber, as can be seen from the form of the discriminant $\Delta$. As this is a tangential intersection, the proper transforms of $Q$ and $E$ intersect $E_1$ at a common point. Blowing up this common point once more leads to an exceptional curve supporting an $I_{16}^{ns}$-fiber. A direct computation of the self-intersections leads precisely to diagram~(\ref{frozendiagram}), and we note that the proper transform of $E_1$ must support a split fiber since it has self-intersection $(-2)$. The resulting discriminant takes the form
\[
\Delta = z_1^{10} z_2^{10}z_3^{16} z_4^{8}\widetilde{\Delta}
\]
We note that $\tilde{\Delta} \cdot D_i = 0$ for all $i$, and hence there are no further $(4,6)$-points in the resulting model.
\subsubsection{Intersections of $\mathfrak{sp}$ with $\mathfrak{so}$}\label{sec:3soint}
Next, we construct an explicit Tate model for the following:
\begin{equation}\label{frozendiagramthree}
\xymatrix{
&& \widehat{I}_4^*(4)\ar@{-}[d]& &\\
&&I_{16}^{ns}(1)\ar@{-}[d]& &\\
\widehat{I}_4^*(4) \ar@{-}[r]&I_{16}^{ns}(1)\ar@{-}[r]& \widehat{I}_{12}^{*}(4)\ar@{-}[r]& I_{16}^{ns}(1) \ar@{-}[r]& \widehat{I}_4^*(4)
    }
\begin{comment}
\begin{tikzcd}
&& \widehat{I}_4^*(4)\arrow[dash]{d}& &\\
&&I_{16}^{ns}(1)& &\\
\widehat{I}_4^*(4) \arrow[dash]{r}&I_{16}^{ns}(1)\arrow[dash]{r}& \widehat{I}_{12}^{*}(4) \arrow[dash]{r} \arrow[dash]{u}& I_{16}^{ns}(1) \arrow[dash]{r} & \widehat{I}_4^*(4)
\end{tikzcd}
\end{comment}
\end{equation}
Let $B$ be the projective plane $\mathbb{P}^2$. Consider the following tuning of the Tate coefficients
\[
a_1 \sim h_1 h_2 h_3, \ a_2 \sim h_1 h_2 h_3 p_2, \ a_3 = 0, a_4  \sim h_1^4 h_2^4 h_3^4, \ a_6 = 0
\]
where we take $h_1, h_2, h_3$ to be three lines intersecting at a common point and we note that $p_2$ is a generic polynomial of degree $3$. The corresponding Weierstrass model takes the following form
\begin{align*}
f&= \frac{1}{48} h_1^2 h_2^2 h_3^2 (47 h_1^2 h_2^2 h_3^2 - 8 h_1 h_2 h_3 p_2 - 16 p_2^2) \\
g &= \frac{1}{864} h_1^3 h_2^3 h_3^3 (h_1 h_2 h_3 + 4 p_2) (71 h_1^2 h_2^2 h_3^2 - 
   8 h_1 h_2 h_3 p_2 - 16 p_2^2) \\
\Delta &= \frac{1}{16} h_1^{10} h_2^{10} h_3^{10} (7 h_1 h_2 h_3 - 4 p_2) (9 h_1 h_2 h_3 + 4 p_2)
\end{align*}
We first note that the cubic $Q$ defined by $p_2 =0$ intersects each line $h_i$ at $3$ $(4,6)$ points. Blowing up all the intersections $Q \cap h_i$, the proper transforms of the lines $h_i$ are $(-2)$-curves. 

We next perform the following phase transitions, moving further along the tensor branch of the corresponding F-theory model. We first blowup the triple intersection $h_1 \cap h_2 \cap h_3$, whose exceptional divisor $E$ supports an $I_{12}^*$ fiber as can be seen from the above Weierstrass model. We note that the intersections $E \cap h_i$ are also $(4,6)$-points, and blowing up these three points leads to three additional exceptional divisors $E_i$, each supporting an $I_{16}^{ns}$-fiber. We conclude by noting that this produces precisely Figure~(\ref{frozendiagramthree}), and that the residual discriminant does not intersect with any divisors in the diagram.
\subsubsection{An infinite family}\label{sec:infinitefam}
Finally, we note that there exists an additional, a priori infinite family of such models with the following intersecting Kodaira fibers
\begin{equation}\label{frozendiagramthree}
\begin{comment}
\begin{tikzcd}
&& \widehat{I}_{m+4}^*(4)\arrow[dash]{d}& &\\
&&I_{4m  +16}^{ns}(1)& &\\
\widehat{I}_{m+4}^*(4) \arrow[dash]{r}&I_{4m+16}^{ns}(1)\arrow[dash]{r}& \widehat{I}_{3m + 12}^{*}(4) \arrow[dash]{r} \arrow[dash]{u}& I_{4m + 16}^{ns}(1) \arrow[dash]{r} & \widehat{I}_{m+4}^*(4)
\end{tikzcd}
\end{comment}
\xymatrix{
&& \widehat{I}_{m+4}^*(4)\ar@{-}[d]& &\\
&&I_{4m  +16}^{ns}(1)\ar@{-}[d]& &\\
\widehat{I}_{m+4}^*(4)\ar@{-}[r]&I_{4m+16}^{ns}(1)\ar@{-}[r]& \widehat{I}_{3m + 12}^{*}(4)\ar@{-}[r] & I_{4m + 16}^{ns}(1)\ar@{-}[r]& \widehat{I}_{m+4}^*(4)
    }
\end{equation}
We briefly provide an alternative argument to show that such models can only exist for $m = 0$, demonstrating that our construction in Section~\ref{sec:3soint} is sharp. Based on the analysis of~\cite{Kim:2019vuc}, \cite{Tarazi:2021duw} used consistency of non-instantonic string probes in $6$d supergravity backgrounds to prove the finiteness of many classes of models. We refer to~\cite{Tarazi:2021duw} for details, and we briefly summarize the necessary aspects in applying the argument to our setting.

Our geometric construction in Section~\ref{sec:3soint} gives a natural embedding of the anomaly vectors of the gauge algebras associated with the unflipped version of~(\ref{frozendiagramthree}). A $D3$-brane wrapped on an effective curve $C$ in a compact algebraic surface $B$ yields a $2d$ $\mathcal{N} = (0,4)$ supersymmetric worldsheet theory in the noncompact $6$-dimensional spacetime. The condition that this corresponds to a non-instantonic string with non-negative tension translates into the following geometric constraints:
\begin{align*}
C^2 \geq -1, \quad k_l = C\cdot (C+ K_B) + 2 \geq 0, \quad k_i = C \cdot D_i \geq 0
\end{align*}
where $D_i$ denotes the divisors supporting the non-abelian gauge algebras in~(\ref{frozendiagramthree}). Unitarity of the string worldsheet theory then imposes the following constraint
\begin{equation}\label{eq:unitarity}
\sum\limits_{i} c_{G_i} \leq c_L = 3 C^2 - 9 C\cdot K + 2, \quad c_{G_i} \coloneqq k_i \frac{dim G_i}{k_i + h_i^\vee}
\end{equation}
where $c_{G_i}$ denotes the central charge associated with the current algebra of the gauge algebra $G_i$, and $h_i^\vee$ denotes the dual Coxeter number of $G_i$.

Taking the three outer $(-4)$-curves in~(\ref{frozendiagramthree}), we recall that these were obtained as the proper transform of hyperplane classes on $\mathbb{P}^2$ by blowing up three points each on three intersecting lines. Choosing two out of these nine points such that these do not lie on the same of the three lines, we take a separate line passing through these two points on $\mathbb{P}^2$. We then take $C$ to denote the proper transform of this curve. A straightforward computation implies that $C$ only intersects with one of the outermost vertices corresponding to the Kodaira fiber $I_{m+4}^*$. Evaluating equation~(\ref{eq:unitarity}) with $C$, we obtain the bound
\[
c_{G} = 8 + m \leq 8
\]
Thus, it must be the case that $m= 0$ and this implies that our geometric construction of the model in Section~\ref{sec:3soint} is sharp. It would be interesting to interpret more generally, the condition on the unitarity of the string worldsheet as a geometric constraint on engineering singular limits of compact Weierstrass elliptic Calabi-Yau threefolds.
\subsubsection{Remaining models}
In this section, we list the remaining diagrams in~\cite{Bhardwaj:2019hhd} which admit a compact embedding in the frozen phase of F-theory. We note in passing that this can be done using the same techniques carried out in the previous sections, but we do not carry out the analysis here. 

In each entry, we list the configuration of Kodaira singular fibers and the corresponding equation in~\cite{Bhardwaj:2019hhd}. We refer the reader to the relevant section in~\cite{Bhardwaj:2019hhd} for the corresponding gauge theory configuration.
%We then give a brief argument for their construction, following the arguments of Section~\ref{sec:constructnew}.
\begin{enumerate}
\item
Equation~$(3.32)$
\begin{equation}\label{frozendiagram}
%\begin{tikzcd}
%I_6^s(2)\arrow[dash]{r}& I_{12}^{ns}(1) \arrow[dash]{r} \arrow[dash]{d}& I_2^{*s}(2) \\
%&\widehat{I}_{4}^*(4)&
%\end{tikzcd}
\xymatrix{
I_6^s(2)\ar@{-}[r]& I_{12}^{ns}(1) \ar@{-}[r] \ar@{-}[d]& I_2^{*s}(2) \\
&\widehat{I}_{4}^*(4)&
    }
\end{equation}
\item
Equation~$(3.34)$
\begin{equation}\label{frozendiagram}
%\begin{tikzcd}
%I_7^s(2)\arrow[dash]{r}& I_{14}^{ns}(1) \arrow[dash]{r} \arrow[dash]{d}& I_3^{*ns}(2) \\
%&\widehat{I}_{4}^*(4)&
%\end{tikzcd}
\xymatrix{
I_7^s(2)\ar@{-}[r]& I_{14}^{ns}(1) \ar@{-}[r] \ar@{-}[d]& I_3^{*ns}(2) \\
&\widehat{I}_{4}^*(4)&
    }
\end{equation}
\item
Equation~$(3.36)/(3.38)$
\begin{equation}\label{frozendiagram}
%\begin{tikzcd}
%I_4^s(2)\arrow[dash]{r}& I_{8}^{ns}(1) \arrow[dash]{r} \arrow[dash]{d}& I_0^{*ss/ns}(2) \\
%&\widehat{I}_{4}^*(4)&
%\end{tikzcd}
\xymatrix{
I_4^s(2)\ar@{-}[r]& I_{8}^{ns}(1)\ar@{-}[r] \ar@{-}[d]& I_0^{*ss/ns}(2) \\
&\widehat{I}_{4}^*(4)&
    }
\end{equation}
\item
Equation~$(3.50)/(3.52)$

\begin{equation}\label{frozendiagram}
%\begin{tikzcd}
%&&\widehat{I}_{4}^*(4)& &\\
%&&I_{12}^{ns}(1)\arrow[dash]{u}& &\\
%\widehat{I}_4^*(4)\arrow[dash]{r}&I_{12}^{ns}(1)\arrow[dash]{r}& \widehat{I}_{8}^{*s}(4) \arrow[dash]{r} \arrow[dash]{u}& I_{8}^{ns}(1)\arrow[dash]{r}& I_0^{*ss/ns}(2)
%\end{tikzcd}
\xymatrix{
&&\widehat{I}_{4}^*(4)& &\\
&&I_{12}^{ns}(1)\ar@{-}[u]& &\\
\widehat{I}_4^*(4)\ar@{-}[r]&I_{12}^{ns}(1)\ar@{-}[r]& \widehat{I}_{8}^{*s}(4) \ar@{-}[r]\ar@{-}[u]& I_{8}^{ns}(1)\ar@{-}[r]& I_0^{*ss/ns}(2)
    }
\end{equation}
\item
Equation~$(3.54)$

\begin{equation}\label{frozendiagram}
%\begin{tikzcd}
%&&\widehat{I}_{4}^*(4)& &\\
%&&I_{13}^{ns}(1)\arrow[dash]{u}& &\\
%\widehat{I}_4^*(4)\arrow[dash]{r}&I_{14}^{ns}(1)\arrow[dash]{r}& \widehat{I}_{9}^{*ns}(4) \arrow[dash]{r} \arrow[dash]{u}& I_{9}^{ns}(1)\arrow[dash]{r}& I_0^{*ns}(2)
%\end{tikzcd}
\xymatrix{
&&\widehat{I}_{4}^*(4)& &\\
&&I_{13}^{ns}(1)\ar@{-}[u]& &\\
\widehat{I}_4^*(4)\ar@{-}[r]&I_{14}^{ns}(1)\ar@{-}[r]& \widehat{I}_{9}^{*ns}(4) \ar@{-}[r] \ar@{-}[u]& I_{9}^{ns}(1)\ar@{-}[r]& I_0^{*ns}(2)
    }
\end{equation}
\item
Equation~$(3.57)$

\begin{equation}\label{frozendiagram}
%\begin{tikzcd}
%&&I_{2}^{*ns}(3)& &\\
%&&I_{8}^{ns}(1)\arrow[dash]{u}& &\\
%I_2^{*ns}(3)\arrow[dash]{r}&I_{8}^{ns}(1)\arrow[dash]{r}& \widehat{I}_{6}^{*s}(4) \arrow[dash]{r} \arrow[dash]{u}& I_{8}^{ns}(1)\arrow[dash]{r}& I_2^{*ns}(3)
%\end{tikzcd}
\xymatrix{
&&I_{2}^{*ns}(3)& &\\
&&I_{8}^{ns}(1)\ar@{-}[u]& &\\
I_2^{*ns}(3)\ar@{-}[r]&I_{8}^{ns}(1)\ar@{-}[r]& \widehat{I}_{6}^{*s}(4) \ar@{-}[r]\ar@{-}[u]& I_{8}^{ns}(1)\ar@{-}[r]& I_2^{*ns}(3)
    }
\end{equation}
\item
Equation~$(3.58)$
\begin{equation}\label{frozendiagram}
%\begin{tikzcd}
%&&I_{2}^{*s}(3)& &\\
%&&I_{11}^{ns}(1)\arrow[dash]{u}& &\\
%I_2^{*s}(3)\arrow[dash]{r}&I_{11}^{ns}(1)\arrow[dash]{r}& \widehat{I}_{9}^{*ns}(4) \arrow[dash]{r} \arrow[dash]{u}& I_{11}^{ns}(1)\arrow[dash]{r}& I_2^{*s}(3)
%\end{tikzcd}
\xymatrix{
&&I_{2}^{*s}(3)& &\\
&&I_{11}^{ns}(1)\ar@{-}[u]& &\\
I_2^{*s}(3)\ar@{-}[r]&I_{11}^{ns}(1)\ar@{-}[r]& \widehat{I}_{9}^{*ns}(4)\ar@{-}[r] \ar@{-}[u]& I_{11}^{ns}(1)\ar@{-}[r]& I_2^{*s}(3)
    }
\end{equation}
\end{enumerate}
\section{Conjectures on neutral hypermultiplets}\label{sec:neutrals}
In Sections~\ref{sec:ordinaryf} and~\ref{sec:frozenscft}, we analyzed a number of compact F-theory models with frozen $7$-branes. Such concrete analyses are potentially helpful in deducing general properties of frozen $7$-branes in $6$-dimensional compactifications. In this section, we analyze the simplest compact embeddings of frozen $7$-branes and make a simple observation based on gravitational anomaly cancellation.

\begin{conj}\label{conj:neutrals}
Consider a $6$d F-theory compactification with a frozen $7$-brane localized along a divisor $D$ with residual discriminant $\widetilde{\Delta}$. Then there exists at least $\frac{1}{2}D \cdot \widetilde{\Delta}$ localized neutral hypermultiplets.
\end{conj}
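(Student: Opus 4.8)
The only robust handle on the neutral sector of a frozen model is gravitational anomaly cancellation, since the $\mathbb{Q}$-factorial terminal (frozen) singularities admit no crepant resolution and the usual identification $H_0 = h^{2,1}(\widehat{X})+1$ is unavailable. The plan is therefore to extract the total number of neutral hypermultiplets from the universal relation $H - V + 29\,T = 273$, which is insensitive to the frozen divisor $F$ and so holds verbatim in the frozen phase. Concretely, for a frozen model one reads off $T = h^{1,1}(B)-1$, computes $V = \sum_i \dim \mathfrak{g}_i$ from the gauge data of Section~\ref{sec:frozenconstruct}, and computes the charged hypermultiplet number $H_{ch}$ from the transverse intersections of the gauge divisors $\Sigma_i$ together with Table 3.1 of~\cite{Bhardwaj:2018jgp}. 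Solving for $H_0 = 273 - 29\,T + V - H_{ch}$ then fixes the total neutral count purely from the anomaly.

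The second step is to separate this total into a bulk piece and a piece localized on $D$. I would identify the bulk neutral hypermultiplets with the unobstructed complex-structure deformations of the residual Weierstrass data $(\widetilde{f},\widetilde{g},\widetilde{\Delta})$ supported away from the frozen locus, together with the universal hypermultiplet; writing $H_0 = H_0^{\mathrm{bulk}} + H_0^{\mathrm{loc}}$, the remainder $H_0^{\mathrm{loc}}$ is by construction supported at the collision points $D \cap \widetilde{\Delta}$. This split can be made explicit in each compact model at hand: for the Aspinwall--Gross model on $\mathbb{F}_4$ and for the minimal $\mathfrak{su}\!-\!\mathfrak{so}$ and $\mathfrak{sp}\!-\!\mathfrak{so}$ constructions of Section~\ref{sec:constructnew} one computes $D\cdot\widetilde{\Delta}=0$ and the conjecture is vacuous, whereas for the $\mathbb{P}^1\times\mathbb{P}^1$ models of~\cite{Bhardwaj:2018jgp}, where the residual discriminant genuinely crosses the frozen brane, the deficit $H_0^{\mathrm{loc}}$ can be compared directly against $\tfrac12\,D\cdot\widetilde{\Delta}$.

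The factor of $\tfrac12$ is the delicate point and is where I expect the argument to be most subtle. The natural explanation is the even-charge rule for $O7^+$-planes recalled in Section~\ref{sec:frozen}: only $(p,q)$-strings with both charges even may end on an $O7^+$, so the residual $(p,q)$ $7$-branes comprising $\widetilde{\Delta}$ must meet $D$ in orientifold-image pairs. The physical number of localized states is then half the geometric intersection number $D\cdot\widetilde{\Delta}$, and the statement is naturally an inequality rather than an equality because non-transverse intersections or coincident residual branes can only raise $H_0^{\mathrm{loc}}$. I would make this precise by passing to a weakly-coupled type IIB limit, in which the $O7^+$ together with its image acts as a genuine orientifold involution and the pairing of intersection points is manifest.

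The main obstacle is that neutral hypermultiplets are complex-structure moduli and are intrinsically global, so the clean separation $H_0 = H_0^{\mathrm{bulk}} + H_0^{\mathrm{loc}}$ used above is not canonical; without a resolution there is no intersection-theoretic count of localized moduli to match against the anomaly output. Turning the examples into a uniform statement therefore requires controlling the deformation theory of $\widetilde{\Delta}$ in a neighbourhood of the frozen divisor, together with a first-principles derivation of the factor $\tfrac12$, both of which remain open; this is precisely why the statement is recorded as a conjecture rather than a theorem.
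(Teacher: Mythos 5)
Your starting point coincides with the paper's --- both arguments rest entirely on $H - V + 29T = 273$ together with the assertion that unlocalized neutral hypermultiplets are complex-structure deformations and hence insensitive to freezing --- but the decisive quantitative step is done differently, and the obstacle you correctly flag at the end (the split $H_0 = H_0^{\mathrm{bulk}} + H_0^{\mathrm{loc}}$ is not canonical, so solving the anomaly for $H_0$ alone does not isolate the localized piece) is precisely what the paper's route is designed to circumvent. Rather than computing $H_0$ in the frozen model in isolation, the paper compares the frozen and unfrozen interpretations of the \emph{same} Weierstrass model: a generic $I_{n+4}^{*s}$ fiber on a $(-4)$-curve $D$ read as $\mathfrak{so}(2n+16)$ with $2n+8$ fundamentals versus $\mathfrak{sp}(n)$ with $2n+8$ fundamentals. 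Since $T$ and the unlocalized neutral count agree between the two readings, the gravitational anomaly forces the frozen phase to carry exactly $(H_{ch}-V)_{\mathfrak{so}(2n+16)} - (H_{ch}-V)_{\mathfrak{sp}(n)} = 2n+8$ extra neutral hypermultiplets, which must therefore be localized; and $2n+8 = \tfrac12(16+4n) = \tfrac12\, D\cdot\widetilde{\Delta}$. This comparison is the content of the paper's table and is the piece missing from your write-up: you should carry out this differential anomaly computation rather than gesture at checking the $\mathbb{P}^1\times\mathbb{P}^1$ examples.

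The factor of $\tfrac12$ is also explained differently. You derive it from the even-$(p,q)$ rule for strings ending on an $O7^+$ and a putative pairing of intersection points under the orientifold image --- a physically plausible but unverified heuristic that the paper does not invoke. The paper's $\tfrac12$ has a purely geometric origin already visible in the \emph{unfrozen} phase: for an $I^*$ fiber the residual discriminant meets $D$ with multiplicity two at each matter point, so the $2n+8$ fundamentals sit at $\tfrac12\,\widetilde{\Delta}\cdot D$ double intersections, and the conjecture simply transfers ``one fundamental per double intersection'' to ``one localized neutral per double intersection'' upon freezing. Your claim that $D\cdot\widetilde{\Delta}=0$ for the Aspinwall--Gross model is correct and consistent with the paper, but note that the paper's central example is not any of the named compact models --- it is the generic $(-4)$-curve computation above, which is what makes the bound sharp for all $n\le 8$.
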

\noindent
We note that none of the compact examples considered thus far exhibit non-trivial intersections between the residual $I_1$-locus and a frozen $O7^+$-plane.

For a $6d$ $\mathcal{N} = (1,0)$ supergravity, we recall that gravitational anomalies impose the following constraint
\[
H - V + 29T = 273
\]
%For convenience, we define the quantity
%\[
%n(T) = 273 - (H-V) -29 T
%\]
where $H$ is the total number of hypermultiplets, charged and uncharged, $V$ is the number of vector multiplets, and $T$ is the number of tensor multiplets.

We consider an otherwise generic Weierstrass model on a compact algebraic surface $B$ with an $I_{n+4}^{*s}$ fiber localized along a divisor $D$ of self-intersection $(-4)$. In the unfrozen phase of F-theory, this realizes an $\mathfrak{so}(2n+16)$ gauge algebra with $2n + 8$ fundamental hypermultiplets. In the frozen phase, this realizes a $\mathfrak{sp(n)}$ gauge algebra with $2n + 8 $ fundamental hypermultiplets. For $6$d supergravity theories with such matter content, we compute the difference $H_{ch}-V$ for every value of $n$ in the following table. We note that for $n \geq 9$, there does not exist any number of tensor multiplets such that the resulting model satisfies gravitational anomalies. In the last row, we subtract the second row from the first row. Using gravitational anomalies, and assuming that the number of tensor multiplets is the same, this computes the difference in neutral hypermultiplets between the two models.

\begin{table}[h!]
\centering
\begin{tabular}{c|ccccccccc}
\toprule
n &0 &1 & 2 & 3 & 4 & 5 & 6 & 7 & 8 \\
\midrule
$(H_{ch}-V)({\mathfrak{so}(2n+16)})$&8& 27& 50& 77& 108 &143 &182& 225& 272\\
$(H_{ch}-V)({\mathfrak{sp}(n)})$&0& 17& 38& 63& 92& 125& 162& 203& 248 \\ 
$n_{\mathfrak{sp}(n)}- n_{\mathfrak{so}(2n+16)} $& 8 & 10& 12& 14& 16& 18& 20& 22& 24 \\ 
\bottomrule
\end{tabular}
\caption{Table of the difference between the number of charged hypermultiplets and vector multiplets of a generic F-theory model with an $\mathfrak{so}(2n+16)$ and an $\mathfrak{sp}(n)$ gauge algebra respectively. The difference in the counting of neutral hypermultiplets is encoded in the last row.}
\end{table}
In general, the total number of neutral hypermultiplets can be split into localized and unlocalized hypermultiplets. We expect that unlocalized neutral hypermultiplets should coincide with complex structure deformations of the total space of the Weierstrass model that do not deform the singularities, and hence that this should yield the same answer between the unfrozen and the frozen phase of F-theory on identical Calabi-Yau threefolds. Moreover, the intersections with the residual discriminant are given by the following
\[
\widetilde{\Delta} \cdot D  = (-12K - (10+ n)D) \cdot D = -24 + 4(10+n) = 16 + 4n
\]
In the unfrozen phase, there are $2n +8$ fundamental hypermultiplets, and so there is $1$ fundamental hypermultiplet for every double intersection\footnote{We thank Jiahua Tian for discussions on this point} with the residual discriminant. We conclude that in the frozen phase, there must be $1$ localized neutral hypermultiplet for every double intersection of the frozen $7$-brane with the residual discriminant.

It would be interesting to confirm such a result directly via a weakly coupled type IIB limit in the frozen phase of F-theory. The existence of such localized neutral hypermultiplets could perhaps be explained via a similar mechanism as in the $I_1$-conifold model explored in~\cite{Braun:2014nva} and \cite[Section 4.1]{Arras:2016evy}.
\section{Acknowledgements}
B.S. is grateful to Yuji Tachikawa and Alessandro Tomasiello for useful correspondences, to Jonathan Heckman, Paul Oehlmann, Fabian Ruehle, and Jiahua Tian for discussions, and to Fabian Ruehle and Paul Oehlmann for collaboration on a related project. We thank Paul Oehlmann for a reading of a preliminary draft. DRM is partially supported by the National Science Foundation Grant PHY-2014226.
\begin{comment}
\section{Conclusion}
Future directions:
\begin{enumerate}
\item
sufficient conditions and systematic construction of consistent frozen vacua
\item
phase transitions in the frozen phase via heterotic/type I
\end{enumerate}
\end{comment}
\begin{appendices}
\end{appendices}
\newpage
\bibliographystyle{JHEP}
\bibliography{refs}

\end{document}